\documentclass[a4paper,UKenglish]{lipics-v2016}
 
\usepackage{microtype}
\usepackage{amsmath,array,bm}
\usepackage{amsmath,amsfonts}
\usepackage{amsthm}
\usepackage{amssymb,latexsym}
\usepackage{algorithm}
\usepackage{algorithmicx}
\usepackage[]{algpseudocode}
\usepackage[dvipsnames]{xcolor}
\usepackage{graphicx}
\usepackage{caption}
\usepackage{enumitem}
\usepackage{float}
\usepackage{hyperref}
\usepackage{cleveref}
\usepackage[square,sort,comma,numbers]{natbib}
\usepackage{tikz}
\usepackage{xspace}
\usepackage{xcolor}
\usepackage{mdframed}
\usepackage{multirow}
\usepackage{hhline}

\newtheorem{claim}{Claim}
\theoremstyle{definition}

\crefname{invar}{invariant}{invariants}
\crefname{constr}{constraint}{constraints}
\crefname{tbl}{table}{tables}
\crefname{lem}{lemma}{lemmas}

\bibliographystyle{plainnat}

\title{On Partial Covering For Geometric Set Systems}
\titlerunning{On Partial Covering For Geometric Set Systems} 

\author[1]{Tanmay Inamdar}
\author[2]{Kasturi Varadarajan}
\affil[1]{Department of Computer Science, University of Iowa, Iowa City, IA, USA.\\
  \texttt{tanmay-inamdar@uiowa.edu}}
\affil[2]{Department of Computer Science, University of Iowa, Iowa City, IA, USA.\\
  \texttt{kasturi-varadarajan@uiowa.edu}}
\authorrunning{T.\, Inamdar and K.\ Varadarajan} 

\Copyright{Tanmay Inamdar and Kasturi Varadarajan}

\subjclass{}
\keywords{}

\newcommand{\real}{\mathbb{R}}

\newcommand{\X}{X}
\newcommand{\R}{\mathcal{R}}
\newcommand{\T}{\mathcal{T}}
\renewcommand{\S}{\mathcal{S}}
\renewcommand{\cdots}{\ldots}
\newcommand{\lr}[1]{\left(#1\right)}
\newcommand{\Red}[1]{{\color{red} #1}}
\newcommand{\Xc}{\X_c}
\newcommand{\Sc}{\S_c}
\renewcommand{\epsilon}{\varepsilon}
\newcommand{\SC}{\textsf{SC}\xspace}

\newcommand{\PSC}{\textsf{PSC}\xspace}
\newcommand{\LP}{\textsf{LP}\xspace}
\DOIPrefix{}

\begin{document}

\maketitle

\begin{abstract}
We study a generalization of the Set Cover problem called the \emph{Partial Set Cover} in the context of geometric set systems. The input to this problem is a set system $(\X, \S)$, where $\X$ is a set of elements and $\S$ is a collection of subsets of $\X$, and an integer $k \le |\X|$. The goal is to cover at least $k$ elements of $\X$ by using a minimum-weight collection of sets from $\S$. The main result of this article is an \LP rounding scheme which shows that the integrality gap of the Partial Set Cover \LP is at most a constant times that of the Set Cover \LP for a certain projection of the set system $(\X, \S)$. As a corollary of this result, we get improved approximation guarantees for the Partial Set Cover problem for a large class of geometric set systems.
\end{abstract}

\section{Introduction}
In the Set Cover (\SC) problem, the input is a set system $(\X, \S)$, where $\X$ is a set of $n$ elements, and $\S$ is a collection of subsets of $\X$ . The goal is to find a minimum size collection $\S' \subseteq \S$ that \emph{covers} $\X$, i.e., the union of the sets in $\S'$ contains the elements of $\X$. In the weighted version, each set $S_i \in \S$ has a non-negative weight $w_i$ associated with it, and we seek to minimize the weight of $\S'$. A simple greedy algorithm finds a solution that is guaranteed to be within $O(\log n)$ factor from the optimal, and it is not possible to do better in general, under certain standard complexity theoretic assumptions \cite{Feige1998}.

The question of whether we can improve the $O(\log n)$ bound has been extensively studied for geometric set systems. We focus on three important classes -- covering, hitting, and art gallery problems. In the Geometric Set Cover problem, $\X$ typically consists of points in $\real^d$, and $\S$ contains sets induced by a certain class of geometric objects via containment. For example, each set in $\S$ might be the subset of $\X$ contained in a hypercube. Some of the well-studied examples include covering points by disks in plane, fat triangles, etc. In the Geometric Hitting Set problem, $\X$ is a set of geometric objects, and each set in $\S$ is the subset consisting of all objects in $\X$ that are pierced by some point. In an example of the art gallery problem, $\X$ consists of a set of points in a simple polygon, and each set in $\S$ is the subset consisting of all points in $\X$ that can be seen by some vertex of the polygon \cite{KingKirk}. Thus, the set system here is defined by visibility. 

For many such geometric set systems, it is possible to obtain approximation guarantees better 
than $O(\log n)$. We survey two of the main approaches to obtain such guarantees. The first and the most successful approach is based on the \SC Linear Program (\LP) and its connection to $\epsilon$-nets. For completeness, we state the standard \SC \LP for the weighted case.
	\begin{alignat}{3}
	\text{minimize}   \displaystyle&\sum\limits_{S_i \in \S} w_{i}x_{i} & \nonumber\\
	\text{subject to} \displaystyle&\sum\limits_{i:e_{j} \in S_{i}}   x_{i} \geq 1,  \quad & \quad e_j \in \X \\
	\displaystyle &\qquad \quad x_i \ge 0, & \quad S_i \in \S
	\end{alignat}
For the unweighted case, \citet{Even2005} showed that if for a certain set system, $O\lr{\frac{1}{\epsilon} \cdot g\lr{\frac{1}{\epsilon}}}$ size $\epsilon$-nets exist, then the integrality gap of the \SC \LP is $O(g(OPT))$, where $OPT$ is the size of the optimal solution. This result is constructive, in that an efficient algorithm for constructing $\epsilon$-nets also yields an efficient algorithm for obtaining an $O(g(OPT))$ approximation. (A similar result was obtained earlier by \citet{BronnimannG1995}, without using LP machinery). It is fairly well-known (\cite{Clarkson1987, HausslerV1987}) that for a large class of geometric set systems, $\epsilon$-nets of size $O\lr{\frac{1}{\epsilon} \log\lr{\frac{1}{\epsilon}}}$ can be computed efficiently, which implies $O(\log(OPT))$ approximation for the set cover problem on the corresponding geometric set system. \citet{ClarksonV2007} showed that if the \emph{union complexity} of any set of $n$ objects is $O(n\cdot h(n))$, then $\epsilon$-nets of size $O\lr{\frac{1}{\epsilon} \cdot h\lr{\frac{1}{\epsilon}}}$ exist. \citet{AronovES2010} gave a tighter bound of $O(\frac{1}{\epsilon} \cdot \log h(\frac{1}{\epsilon}))$ on the size of $\epsilon$-nets for the objects of union complexity $O(n\cdot h(n))$ (see also \cite{VaradarajanUnion2009}). Some of these results were extended to the weighted case in \cite{VaradarajanWGSC2010, ChanGKS12} by a technique called \emph{quasi-uniform sampling}. We summarize some of these $\epsilon$-net based results for the set cover problem for geometric set systems in the accompanying table.

\begin{table}[hbt] \label[tbl]{tbl:wsc-results}
	\centering
	\begin{tabular}{|c|c|c|}
		\hline
		$\X$ & Geometric objects inducing $\S$ & Integrality Gap of \SC \LP\\
		\hline
		\multirow{2}{*}{Point in $\real^2$} & Disks (via containment) & $O(1)$ \\
		\hhline{|~|--}
		& Fat triangles (containment) & $O(\log \log^* n)$ \\
		\hline
		\multirow{2}{*}{Points in $\real^3$} & Unit cubes (containment) & $O(1)$ \\
		\hhline{|~|--}
		& Halfspaces (containment) & $O(1)$ \\
		\hline
                Rectangles in $\real^3$ & Points (via piercing) & $O(\log \log n)$ \\
                \hline
                Points on 1.5D terrain & Points on terrain (via visibility) & $O(1)$ \cite{ElbTerrain} \\
                \hline 
	\end{tabular}
\caption{LP-based approximation ratios for \SC. See \cite{ClarksonV2007,AronovES2010,VaradarajanWGSC2010, EASFat, ChanGKS12} for the references establishing these bounds. Except for piercing rectangles in $\real^3$ by points, these bounds hold for the weighted \SC. For these problems, we obtain analogous results for weighted \PSC.}
\end{table}
Another approach for tackling \SC for geometric set systems is by combinatorial algorithms. The dominant paradigm from this class is the simple Local Search algorithm. The effectiveness of Local Search was first demonstrated by \citet{MustafaR2010}, who gave the first PTAS for covering points by disks in plane. There have been a series of results that build on their work, culminating in \citet{GovindarajanRRB2016}, who show that Local Search yields a PTAS for \SC for a fairly general class of objects such as pseudodisks and non-piercing regions in plane. \citet{KrohnTerrain} gave a PTAS for the terrain guarding problem, where the geometric set system is defined by visibility. Another common strategy, called the \emph{shifting strategy}, was introduced by \citet{HochbaumM1985}. They give a PTAS for covering points by unit balls in $\real^d$; however in this case the set $\S$ consists of \emph{all} unit balls in $\real^d$. \citet{ChanPiercing} gave a PTAS for piercing a set of fat objects in $\real^d$ using a minimum number of points from $\real^d$.

Now we turn to the Partial Set Cover (\PSC) problem. The input to \PSC is the same as that to the \SC, along with an additional integer parameter $k \le |\X|$. Here the goal is to cover at least $k$ elements from $\X$ while minimizing the size (or weight) of the solution $\S' \subseteq S$. It is easy to see that \PSC is a generalization of \SC, and hence it is at least as hard as \SC. We note here that another classical problem that is related to both of these problems is the so-called Maximum Coverage (\textsf{MC}) problem. In this problem, we have an upper bound on the number of sets that can be chosen in the solution, and the goal is to cover the maximum number of elements. It is a simple exercise to see that an exact algorithm for the unweighted \PSC can be used to solve \textsf{MC} exactly, and vice versa. However the reductions are not approximation-preserving. In particular, the greedy algorithm achieves $1 - 1/e$ approximation guarantee for \textsf{MC} --- which is essentially the best possible --- whereas it is \textsf{NP}-hard to approximate \PSC within $o(\log n)$ factor in general. We refer the reader to \cite{KhullerMN1999} for a generalization of \textsf{MC} and a survey of results.

For \PSC, the greedy algorithm is shown to be an $O(\log \Delta)$ approximation in \cite{Slavik1997}, where $\Delta$ is the size of the largest set in $\S$. \citet{Gandhi2004} give a primal-dual based algorithm which achieves an approximation guarantee of $f$, where $f$ is the maximum frequency of any element in the sets. A special case of \PSC is the Partial Vertex Cover (\textsf{PVC}) problem, where we need to pick a minimum size (or weight) subset of vertices that covers at least $k$ edges of the graph. \citet{BshoutyL1998} describe a $2$-approximation based on \LP rounding for \textsf{PVC}. 
Improvements for some special classes of graphs are described in \citet{Gandhi2004}. See also \cite{Mestre2009, KonemannPS2011} for more recent results on \textsf{PVC}, \PSC, and related problems.

While \SC for various geometric set systems has been studied extensively, there are relatively fewer works studying \PSC in the geometric setting. \citet{Gandhi2004} give a PTAS for a geometric version of \PSC where $\S$ consists of \emph{all} unit disks in the plane. They provide a dynamic program on top of the standard shifting strategy of \citet{HochbaumM1985}, thus adapting it for \PSC. Using a similar technique, \citet{GlasserRS2008} give a PTAS for a generalization of partial geometric covering, under a certain assumption on the \emph{density} of the given disks.  \citet{ChanN2015} give a PTAS for covering points by unit squares in the plane. 


\paragraph*{Our Results and Techniques}
Suppose that we are given a \PSC instance $(\X, \S, k)$. For any set of elements $\X_1 \subseteq \X$, let $\S_{\X_1} := \{S \cap \X_1 \mid S \in \S \}$ denote the projected set system. Suppose also that for any projected \SC instance $(\X_1, \S_{|\X_1})$, (where $\X_1 \subseteq \X$) and a corresponding feasible \SC \LP solution $\sigma_1$, we can round $\sigma_1$ to a feasible integral \SC solution with cost at most $\beta$ times that of $\sigma_1$. That is, we suppose that we can efficiently compute a $\beta$-approximation for the $\SC$ instance $(\X_1, \S_{|\X_1})$ by solving the natural LP relaxation and rounding it. Then, we show that we can round the natural \PSC \LP to an integral solution to within a $2\beta + 2$ factor. By the previous discussion about existence of such rounding algorithms for \SC \LP for a large class of geometric objects (cf. \Cref{tbl:wsc-results}), we get the same guarantees for the corresponding \PSC instances as well (up to a constant factor). For clarity, we describe a sample of these applications.

\begin{enumerate}
\item Suppose we are given a set $P$ of $n$ points and a set $\T$ of {\em fat} triangles in the plane and a positive weight for each triangle in $\T$. We wish choose a subset $\T' \subseteq 
\T$ of triangles that covers $P$, and minimize the weight of $\T'$, defined to be the sum of the weights of the triangles in it. This is a special case of weighted \SC obtained by setting $\X = P$, and adding the set $T \cap P$ to $\S$ for each triangle in $T \in \T$, with the same weight. There is an $O(\log \log^* n)$ approximation for this problem based on rounding \SC \LP \cite{EASFat, ChanGKS12}. We obtain the same approximation guarantee for the partial covering version, where we want a minimum weight subset of $\T$ covering any $k$ of the points in $P$.
\item Suppose we are given a set $\R$ of $n$ axis-parallel rectangles and a set $P$ of points in $\real^3$, and we wish to find a minimum cardinality subset of $P$ that hits (or pierces) $\R$. This a special case of \SC obtained by setting $\X = \R$, and adding the set $\{R \in \R \ | \ p \in R\}$ to $\S$ for each point $p \in P$. There is an $O(\log \log n)$ approximation for this problem based on rounding \SC \LP \cite{AronovES2010}. Thus, we obtain the same approximation guarantee for the partial version, where we want a minimum cardinality subset of $P$ piercing any $k$ of the rectangles in $\R$.
\item Suppose we have a 1.5D terrain (i.e.,an $x$-monotone polygonal chain in $\real^2$), a set 
$P$ of points and a set $G$ of $n$ points, called guards, on the terrain along with a positive weight for each guard in $G$. The goal is to choose a subset $G' \subset G$ such that each point in $P$ is seen by some guard in $G$, and minimize the weight of $G'$. Two points $p$ and $g$ on the terrain see each other if the line segment connecting them does not contain a point below the terrain. This is a special case of \SC obtained by setting $\X = P$, and adding the set $\{p \in P \ | \ g \mbox{ sees } p\}$ to $\S$ for each guard $g \in G$. There is an $O(1)$ approximation guarantee for this problem based on rounding \SC \LP \cite{ElbTerrain}. Thus, we obtain an $O(1)$ approximation for the partial version, where we want a minimum weight subset of $G$ that sees any $k$ of the points in $P$.
\end{enumerate}     

Our algorithm for rounding a solution to the natural \PSC \LP corresponding to partial cover 
instance $(\X, \S, k)$ proceeds as follows. Let $\X_1$ be the elements that are covered by the 
LP solution to an extent of at least $1/2$. By scaling the \LP solution by a factor of $2$, we get
a feasible solution to the \SC \LP corresponding to $(\X_1, \S_{|X_1})$, which we round using the 
\LP-based $\beta$-approximation algorithm. For the set $\X \setminus \X_1$, the \LP solution 
provides a total fractional coverage of at least $k - |\X_1|$. Crucially, each element of
$\X \setminus \X_1$ is {\em shallow} in that it is covered to an extent of at most $1/2$. We use this observation to round the \LP solution to an integer solution, of at most twice the cost, that 
covers at least $k - |\X_1|$ points of $\X \setminus \X_1$.  This rounding step and its analysis are inspired by the \textsf{PVC} rounding scheme of \cite{BshoutyL1998}, however there are certain 
subtleties in adapting it to the \PSC problem. To the best of our knowledge, this connection between the \SC \LP and \PSC \LP was not observed before.

The rest of this article is organized as follows. In \Cref{sec:prelim}, we describe the standard \LP formulation for the \PSC problem, and give an integrality gap example. We describe how to circumvent this integrality gap by preprocessing the input in \Cref{sec:preproc}. Finally, in \Cref{sec:rounding}, we describe and analyze the main \LP rounding algorithm.

\section{Preliminaries} \label{sec:prelim}
\textbf{\LP Formulation}

We use the following Integer Programming formulation of \PSC (left). Here, for each element $e_j \in \X$, the variable $z_j$ denotes whether it is one of the $k$ elements that are chosen by the solution. For each such chosen element $e_j$, the first constraint ensures that at least one set containing it must be chosen. The second constraint ensures that at least $k$ elements are covered by the solution. We relax the integrality \cref{constr:integral-z,constr:integral-x}, and formulate it as a Linear Program (right).

\begin{minipage}{0.475\textwidth}
		\begin{mdframed}[backgroundcolor=gray!9] 
			\fontsize{8.5}{9.5}\selectfont
			\begin{alignat}{3}
			\text{minimize}   \displaystyle&\sum\limits_{S_i \in \S} w_{i}x_{i} & \nonumber\\
			\text{subject to} \displaystyle&\sum\limits_{i:e_{j} \in S_{i}}   x_{i} \geq z_j,  \quad &e_j \in \X \nonumber\\
			\displaystyle&\sum_{e_j \in \X}z_j \ge k, & \nonumber\\
			\displaystyle &z_j \in \{0, 1\}, & e_j \in \X \label[constr]{constr:integral-z}\\
			\displaystyle &x_i \in \{0, 1\}, & S_i \in \S \label[constr]{constr:integral-x}
			\end{alignat}
			\centering Integer Program
	\end{mdframed}
\end{minipage}
\begin{minipage}{0.48\textwidth}
			
	\begin{mdframed}[backgroundcolor=gray!9] 
		\fontsize{8.5}{9.5}\selectfont
		\begin{alignat}{3}
			\text{minimize}   \displaystyle&\sum\limits_{S_i \in \S} w_{i}x_{i} & \label[constr]{constr:objective}\\
			\text{subject to} \displaystyle&\sum\limits_{i:e_{j} \in S_{i}}   x_{i} \geq z_j,  \quad &e_j \in \X \label[constr]{constr:coverage}\\
			\displaystyle&\sum_{e_j \in \X}z_j \ge k, & \label[constr]{constr:k-cover}\\
			\displaystyle &z_j \in [0, 1], & e_j \in \X \label[constr]{constr:fractional-z}\\
			\displaystyle &x_i \in [0, 1], & S_i \in \S \label[constr]{constr:fractional-x}
		\end{alignat}
		\centering Linear Program
	\end{mdframed}
\end{minipage}
\\

Since \SC is a special case of \PSC where $k = n$, the corresponding \LP can be obtained by setting $k$ appropriately in \Cref{constr:k-cover}. However, in this case, the \LP can be further simplified as described earlier. We denote the cost of a \PSC \LP solution $\sigma = (x, z)$, for the instance $(\X, \S)$, as $cost(\sigma) := \sum_{S_i \in \S} w_i x_i$, and the cost of an \SC \LP solution is defined in exactly the same way. Also, for any collection of sets $\S' \subseteq \S$, we define $w(S') := \sum_{S_i \in \S'} w_i$. Finally, for a \PSC instance $(\X, \S, k)$, let $OPT(\X, \S, k)$ denote the cost of an optimal solution for that instance.

Unlike \SC \LP, the integrality gap of \PSC \LP can be as large as $O(n)$, even for the unweighted case. 

\textbf{Integrality Gap}. Consider the set system $(\X, \S)$, where $\X = \{e_1, \cdots, e_n\}$, and $\S = \{S_1\}$, where $S_1 = X$. Here, $k = 1$, so at least one element has to be covered. The size of the optimal solution is $1$, because the only set $S_1$ has to be chosen. However, consider the following fractional solution $\sigma = (x, z)$, where $z_j = \frac{1}{n}$ for all $e_j \in \X$, and $x_1 = \frac{1}{n}$, which has the cost of $\frac{1}{n}$. This shows the integrality gap of $n$.

However, \citet{Gandhi2004} show that after ``guessing'' the heaviest set in the optimal solution, the integrality gap of the \LP corresponding to the residual instance is at most $f$, where $f$ is the maximum frequency of any element in the set system (this also follows from the modification to the rounding algorithm of \cite{BshoutyL1998}, as commented earlier). In this article, we show that after guessing the heaviest set in the optimal solution, the residual instance has integrality gap at most $2\beta+2$, where $\beta$ is the integrality gap of the \SC \LP for some projection of the same set system.

\section{Preprocessing} \label{sec:preproc}
Let $(\X', \S', k')$ be the original instance. To circumvent the integrality gap, we preprocess the given instance to ``guess'' the heaviest set in the optimal solution, and solve the residual instance as in \cite{BshoutyL1998, Gandhi2004} -- see \Cref{alg:partialcover}. Let us renumber the sets $\S' = \{S_1, \ldots, S_m\}$, such that $w_1 \le w_2 \le \ldots \le w_m$. For each $S_i \in \S'$, let $\S_i = \{S_1, S_2, \ldots, S_{i-1}\}$, and $\X_i = \X' \setminus S_i$. We find the approximate solution $\Sigma_i$ for this residual instance $(\X_i, \S_i, k_i)$ with coverage requirement $k_i = k-|S_i|$, if it is feasible (i.e.  $\left|\bigcup_{S \in \S_i} S \cap X_i \right| \ge k_i$). We return $\Sigma = \arg\min_{S_i \in \S'} w(\Sigma_i \cup \{S_i\})$ over all $S_i$ such that the residual instance $(\X_i, \S_i, k_i)$ is feasible.

\begin{algorithm}[hbt]
	\caption{PartialCover$(\X', \S', k')$} \label{alg:partialcover}
	\begin{algorithmic}[1] 	
		\State Sort and renumber the sets in $\S' = \{S_1, \cdots, S_m \}$ such that $w_1 \le \ldots \le w_m$.
		\For{$i = 1$ to $m$}
			\State $\S_i \gets \{S_1, \ldots, S_{i-1}\}$
			\State $\X_i \gets \X' \setminus S_i$
			\State $k_i \gets k' - |S_i|$
			\If{$(\S_i, \X_i, k_i)$ is feasible}
				\State $\Sigma_i \gets $ approximate solution to $(\X_i, \S_i, k_i)$
			\Else
				\State $\Sigma_i \gets \perp$
			\EndIf
		\EndFor
		\State \Return $\arg\min_{S_i \in \S': \Sigma_i \neq \perp} w(\Sigma_i \cup \{S_i\})$
	\end{algorithmic}
\end{algorithm}

\begin{lemma} \label[lem]{lem:lemma-partial-cover}
	Let $\Sigma^*$ be the optimal partial cover for the instance $(\X', \S', k')$, and let $S_p$ be the heaviest set in $\Sigma^*$. Let $\Sigma_p$ be the approximate solution to $(\X_p, \S_p, k_p)$ returned by the Rounding Algorithm of \Cref{thm:rounding-theorem}, and $\Sigma'$ be the solution returned by \Cref{alg:partialcover}. Then,
	\begin{enumerate}
		\item $OPT(\X', \S', k') = OPT(\X_p, \S_p, k_p) + w_p$
		\item $w(\Sigma') \le w(\Sigma_p \cup \{S_p\}) \le (2\beta + 2) \cdot OPT(\X', \S', k')$
	\end{enumerate}
\end{lemma}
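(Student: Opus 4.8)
The plan is to prove the two parts in sequence, with part (1) doing the real work and part (2) following by combining it with the rounding guarantee and the definition of \Cref{alg:partialcover}. For part (1), I would first fix $S_p$ to be the set of $\Sigma^*$ with the \emph{largest index} in the sorted order $w_1 \le \dots \le w_m$; since $S_p$ is then a heaviest set of $\Sigma^*$, every other set of $\Sigma^*$ has smaller index and hence lies in $\S_p = \{S_1,\dots,S_{p-1}\}$. To show $OPT(\X_p,\S_p,k_p) \le OPT(\X',\S',k') - w_p$, take $\Sigma^* \setminus \{S_p\}$ and project it onto $\X_p = \X' \setminus S_p$: each set becomes a member of $\S_p$ in the projected system, its total weight does not increase and is at most $w(\Sigma^*) - w_p$, and since $\Sigma^*$ covers at least $k'$ elements of $\X'$ of which at most $|S_p|$ lie inside $S_p$, at least $k' - |S_p| = k_p$ of the covered elements lie in $\X_p$ and are therefore covered by the projected family (they cannot be covered by $S_p$). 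For the reverse inequality $OPT(\X',\S',k') \le OPT(\X_p,\S_p,k_p) + w_p$, take an optimal solution of the residual instance, lift each projected set back to a set of $\S'$ of the same weight, and add $S_p$; because $S_p$ and $\X_p$ are disjoint as subsets of $\X'$, this family covers at least $|S_p| + k_p = k'$ elements of $\X'$. The two inequalities together give the equality.

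For part (2), the construction in the forward direction of part (1) also shows that the residual instance $(\X_p,\S_p,k_p)$ is feasible, so $\Sigma_p \neq \perp$ in \Cref{alg:partialcover} and the index $p$ is among those over which the final $\arg\min$ is taken; hence $w(\Sigma') \le w(\Sigma_p \cup \{S_p\})$, the first inequality. For the second, use $w(\Sigma_p \cup \{S_p\}) \le w(\Sigma_p) + w_p$ together with the fact that the rounding algorithm of \Cref{thm:rounding-theorem} returns a $(2\beta+2)$-approximation for the residual \PSC instance, so that $w(\Sigma_p) \le (2\beta+2)\,OPT(\X_p,\S_p,k_p)$. Substituting $OPT(\X_p,\S_p,k_p) = OPT(\X',\S',k') - w_p$ from part (1) gives $w(\Sigma_p \cup \{S_p\}) \le (2\beta+2)\,OPT(\X',\S',k') - (2\beta+1)\,w_p \le (2\beta+2)\,OPT(\X',\S',k')$, since $\beta \ge 0$ and $w_p \ge 0$.

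The proof is mostly bookkeeping and I do not expect a serious obstacle; the one place that needs care is the correspondence between the residual \PSC instance and the original set system — namely that projecting onto $\X_p$ and lifting back preserve (or do not increase) set weights, that coverage counts split cleanly because $S_p$ and $\X_p$ partition the relevant part of $\X'$, and that ties among set weights are broken so that every set of $\Sigma^*$ other than $S_p$ genuinely lies in $\S_p$. The only other point worth stating explicitly is the feasibility of $(\X_p,\S_p,k_p)$, which legitimizes using $S_p$ as a candidate in the algorithm's final minimization and falls straight out of the construction in part (1).
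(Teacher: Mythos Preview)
Your argument for part (1) is fine and essentially matches the paper's (the paper phrases one direction as a contradiction, you phrase both directions as inequalities; your remark about breaking ties by index is a useful clarification the paper glosses over).

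There is, however, a real gap in part (2). You write that the rounding algorithm of \Cref{thm:rounding-theorem} ``returns a $(2\beta+2)$-approximation for the residual \PSC instance, so that $w(\Sigma_p) \le (2\beta+2)\,OPT(\X_p,\S_p,k_p)$.'' That is not what the theorem guarantees. The statement of \Cref{thm:rounding-theorem} is
\[
w(\Sigma_p) \le (2\beta+2)\cdot OPT(\X_p,\S_p,k_p) + B,
\]
where $B$ is the weight of the heaviest set in $\S_p$. The additive $+B$ term is precisely the reason the preprocessing step exists: by guessing $S_p$ and restricting to $\S_p = \{S_1,\dots,S_{p-1}\}$, one ensures $B \le w_p$, and this is what allows the extra term to be absorbed. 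With the correct bound the computation becomes
\[
w(\Sigma_p \cup \{S_p\}) \le (2\beta+2)\,OPT(\X_p,\S_p,k_p) + B + w_p \le (2\beta+2)\bigl(OPT(\X_p,\S_p,k_p) + w_p\bigr) = (2\beta+2)\,OPT(\X',\S',k'),
\]
using $B \le w_p$ for the second inequality and part (1) for the last equality. Your final inequality $-(2\beta+1)w_p \le 0$ is too strong and rests on the misquoted theorem; once you include the $+B$ term the slack shrinks to $-2\beta\,w_p$, which is still nonnegative but for the right reason.
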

\begin{proof}
	Since the optimal solution $\Sigma^*$ contains $S_p$, $\Sigma^*_p := \Sigma^* \setminus \{S_p\}$ covers at least $k' - |S_p| = k_p$ elements from $\X' \setminus S_p$. Therefore, $\Sigma^*_p$ is feasible for $(\X_p, \S_p, k_p)$. If we show that $w(\Sigma^*_p) = OPT(\X_p, \S_p, k_p)$, the first part follows. Assume for contradiction that there is a set $\Sigma'_p \subseteq \S_p$ such that $w(\Sigma'_p) = OPT(\X_p, \S_p, k_p) < w(\Sigma^*_p)$. However, $\Sigma'_p$ covers at least $k_p = k' - |S_p|$ elements from $\X' \setminus S_p$. So $\Sigma'_p \cup \{S_p\}$ covers at least $k'$ elements from $\X'$, and has weight $w(\Sigma_p') + w_p < w(\Sigma^*_p) + w_p = w(\Sigma^*)$, which is a contradiction.
	
	From \Cref{thm:rounding-theorem}, we have an approximate solution $\Sigma_p$ to the instance $(\X_p, \S_p, k_p)$ such that $w(\Sigma_p) \le (2\beta + 2) \cdot OPT(\X_p, \S_p, k_p) + B$, where $B = w_p$ is the weight of the heaviest set in the optimal solution. Now \Cref{alg:partialcover} returns a solution whose cost is at most $w(\Sigma_p \cup \{S_p\}) \le (2\beta + 2) \cdot OPT(\X_p, \S_p, k_p) + w_p + w_p \le (2\beta+2) \cdot (OPT(\X_p, \S_p, k_p) + w_p) \le (2 \beta+2) \cdot OPT(\X', \S', k').$ We use the result from part 1 in the final inequality.
\end{proof}

We summarize our main result in the following theorem, which follows easily from \Cref{lem:lemma-partial-cover}.

\begin{theorem}
	Let $(\X', \S')$ be a set system, such that we can round a feasible \SC \LP for any projected set system $(\X_1, \S'_{|\X_1})$ to within $\beta$ factor, where $\X_1 \subseteq \X'$. Then, we can find a $(2\beta+2)$-factor approximation for the partial set cover instance $(\X', \S', k')$, where $1 \le k' \le n$.
\end{theorem}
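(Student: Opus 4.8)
The plan is to observe that the theorem is an immediate consequence of running \Cref{alg:partialcover} with the assumed $\beta$-approximate \SC \LP rounding procedure plugged in as the subroutine that computes $\Sigma_i$, together with \Cref{lem:lemma-partial-cover}. Three things need to be checked: that the algorithm always returns a feasible partial cover, that the weight of the returned solution is at most $(2\beta+2)\cdot OPT(\X',\S',k')$, and that it runs in polynomial time. The quantitative work is already done in \Cref{lem:lemma-partial-cover} and \Cref{thm:rounding-theorem}, so the rest is essentially bookkeeping.

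For feasibility, I would first note that the original instance admits at least one feasible partial cover, since $1 \le k' \le n = |\X'|$ and (as is implicit in the problem definition) $\left|\bigcup_{S\in\S'}S\right|\ge k'$; in particular an optimal solution $\Sigma^*$ exists. Let $S_p$ be a maximum-weight set of $\Sigma^*$, breaking ties by choosing the one with the largest index in the sorted order $w_1\le\cdots\le w_m$. Then every other set of $\Sigma^*$ has strictly smaller index (a set of smaller weight has smaller index because the weights are non-decreasing, and a set of weight $w_p$ other than $S_p$ has index below $p$ by the tie-break), so $\Sigma^*\setminus\{S_p\}\subseteq\{S_1,\dots,S_{p-1}\}=\S_p$ and it covers at least $k'-|S_p|=k_p$ elements of $\X_p=\X'\setminus S_p$; hence the residual instance $(\X_p,\S_p,k_p)$ is feasible and \Cref{alg:partialcover} does not return $\perp$. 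Conversely, for every index $i$ with $\Sigma_i\ne\perp$, the set $\Sigma_i\subseteq\S_i$ covers at least $k_i=k'-|S_i|$ elements of $\X_i=\X'\setminus S_i$, so $\Sigma_i\cup\{S_i\}$ covers those together with all of $S_i$, i.e.\ at least $k'$ elements of $\X'$. Thus every candidate $\Sigma_i\cup\{S_i\}$, and in particular the minimum-weight one returned, is a feasible partial cover of $(\X',\S',k')$.

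For the approximation bound I would reuse the index $p$ identified above. Since $(\X_p,\S_p,k_p)$ is feasible, the algorithm computes $\Sigma_p$ via the Rounding Algorithm of \Cref{thm:rounding-theorem}, and \Cref{lem:lemma-partial-cover}(2) gives $w(\Sigma_p\cup\{S_p\})\le(2\beta+2)\cdot OPT(\X',\S',k')$. Because \Cref{alg:partialcover} returns $\arg\min$ of $w(\Sigma_i\cup\{S_i\})$ over all feasible $i$, and $i=p$ is one of these, the returned solution $\Sigma'$ satisfies $w(\Sigma')\le w(\Sigma_p\cup\{S_p\})\le(2\beta+2)\cdot OPT(\X',\S',k')$. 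For the running time, the main loop executes $m=|\S'|$ iterations, each of which solves the \PSC \LP for a projected subinstance and invokes the polynomial-time $\beta$-approximate \SC \LP rounding procedure once (through \Cref{thm:rounding-theorem}), with only sorting and feasibility tests as overhead; the total is therefore polynomial.

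The only genuine subtlety — and hence the step I would be most careful about — is the handling of the unknown heaviest set: we cannot identify $S_p$ in advance, so we must enumerate all $m$ candidate choices and rely on the fact that the branch corresponding to the true maximum-weight set of $\Sigma^*$ is among them, and that the tie-break by largest index is precisely what guarantees $\Sigma^*\setminus\{S_p\}\subseteq\S_p$ so that \Cref{lem:lemma-partial-cover} applies to that branch. Once this is in place, combining the feasibility argument of the second paragraph with the weight bound of the third paragraph yields the claimed $(2\beta+2)$-approximation.
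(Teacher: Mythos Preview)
Your proposal is correct and follows the same approach as the paper, which simply states that the theorem ``follows easily from \Cref{lem:lemma-partial-cover}'' after describing \Cref{alg:partialcover}. You have filled in the feasibility, running-time, and tie-breaking details that the paper leaves implicit; in particular, your observation that one should take $S_p$ to be the maximum-weight set of $\Sigma^*$ with \emph{largest} index (so that $\Sigma^*\setminus\{S_p\}\subseteq\S_p$) patches a small gap that the paper glosses over.
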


\section{Rounding Algorithm} \label{sec:rounding}
Suppose that we have guessed the maximum weight set $S_p \in \S'$ in the optimal solution for the original instance $(\X', \S', k')$, as described in the previous section. Thus, we now have the residual instance $(\X_p, \S_p, k_p)$, where $\X_p = (\X' \setminus S_p), \S_p = \{S_1, S_2, \cdots, S_{p-1}\}$, and $k_p = k' - |S_p|$. We solve the \LP corresponding to the \PSC instance $(\X_p, \S_p, k_p)$ to obtain an optimal \LP solution $\sigma^* = (x, z)$. In the following, we describe a polynomial time algorithm to round \PSC \LP on this instance.

Let $0 < \alpha \le 1/2$ be a parameter (finally we will set $\alpha = 1/2$). Let $\X_1 = \{e_j \in \X_p \mid \sum_{i: e_j \in S_i} x_i \ge \alpha\}$ be the set of elements that are covered to an extent of at least $\alpha$ by the \LP solution.

We create an instance $\sigma_1$ of a feasible set cover \LP for the instance $(\X_1, {\S_p}_{|\X_1})$ as follows. For all sets $S_i \in \S_p$, we set $x'_i = \min\{\frac{x_i}{\alpha}, 1\}$. Note that cost of this fractional solution is at most $\frac{1}{\alpha}$ times that of $\sigma^*$. Also, note that $\sigma_1$ is feasible for the \SC \LP because for any element $e_j \in \X_1$, we have that 
$$\sum_{i: e_j \in S_i} x'_i = \sum_{i: e_j \in S_i} \min\left\{1, \frac{x_i}{\alpha}\right\} \ge \min\bigg\{ 1, \frac{1}{\alpha} \sum_{i:e_j \in S_i} x_i\bigg\} \ge 1$$

Suppose that there exists an efficient rounding procedure to round a feasible \SC \LP solution $\sigma_1$, for the instance $(\X_1, {\S_p}_{|\X_1})$ to a solution with weight at most $\beta \cdot cost(\sigma_1)$. In the remainder of this section, we describe an algorithm (\Cref{alg:rounding}) for rounding $\sigma^* = (x, z)$ into a solution that (1) covers at least $k_p - |\X_1|$ elements from $\X_p \setminus \X_1$, and (2) has cost at most $\frac{1}{\alpha} \cdot cost(\sigma^*) + B$, where $B$ is the weight of the heaviest set in $\S_p$. Combining the two solutions thus acquired, we get the following theorem.

\begin{theorem} \label{thm:rounding-theorem}
	There exists a rounding algorithm to round a partial cover \LP corresponding to $(\X_p, \S_p, k_p)$, which returns a solution $\Sigma_p$ such that $w(\Sigma_p) \le (2\beta+2) \cdot OPT(\X_p, \S_p, k_p) + B$, where $B$ is the weight of the heaviest set in $\S_p$. 
\end{theorem}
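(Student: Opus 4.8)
The plan is to round the optimal \PSC \LP solution $\sigma^{*} = (x,z)$ for $(\X_p,\S_p,k_p)$ by splitting the ground set at the coverage threshold $\alpha = \tfrac12$. Let $\X_1 = \{e_j \in \X_p : \sum_{i : e_j \in S_i} x_i \ge \tfrac12\}$ be the \emph{deep} elements, $\X_2 = \X_p \setminus \X_1$ the \emph{shallow} ones, $C = cost(\sigma^{*})$, and $q = k_p - |\X_1|$; if $q \le 0$ a cover of $\X_1$ alone already covers $\ge k_p$ elements, so assume $q \ge 1$. Because the \LP relaxes the integer program, $C \le OPT(\X_p,\S_p,k_p)$, so it suffices to produce a feasible integral solution of weight at most $(2\beta+2)C + B$. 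The output $\Sigma_p$ will be the union of (i) a cover $\Sigma^{(1)}$ of $\X_1$ and (ii) a partial cover $\Sigma^{(2)}$ of at least $q$ elements of $\X_2$. For (i), as observed above $x'_i = \min\{1,2x_i\}$ is a feasible \SC \LP solution for the projection $(\X_1,{\S_p}_{|\X_1})$ with $cost \le 2C$, so the assumed $\beta$-factor \SC rounding returns $\Sigma^{(1)}$ with $w(\Sigma^{(1)}) \le 2\beta C$.

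Part (ii) is the heart of the argument and needs a different, combinatorial treatment, since the elements of $\X_2$ are covered to extent strictly below $\tfrac12$ and no bounded rescaling of $\sigma^{*}$ turns it into a feasible \SC \LP for $\X_2$. Here I would adapt the \textsf{PVC} rounding of~\cite{BshoutyL1998}, using two structural facts: (a) every set meeting $\X_2$ has $x_i < \tfrac12$, since otherwise the element witnessing membership would be deep; and (b) $\sum_{e_j \in \X_2} z_j \ge q$, as $\sum_j z_j \ge k_p$ while each of the $|\X_1|$ deep elements contributes at most $1$. Set $\hat x_i := 2x_i$ for sets meeting $\X_2$ (all others are irrelevant); then $\sum_i w_i \hat x_i \le 2C$, and combining (a) with the coverage constraints and (b) shows that the total $\hat x$-coverage of $\X_2$, namely $\sum_i \hat x_i\,|S_i\cap\X_2|$, is at least $2q$ --- twice the amount we must reach. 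Now run greedy partial cover on $\X_2$: maintain the set $R$ of uncovered shallow elements and repeatedly add the set maximizing $|S_i\cap R|/w_i$ until $q$ elements of $\X_2$ are covered (weight-$0$ sets, if any, are taken up front; feasibility of $(\X_p,\S_p,k_p)$ ensures $R$ always meets some set, so greedy does not stall). For the weight bound, when $u := |\X_2|-|R| < q$ elements have been covered, the $u$ already-covered shallow elements absorb less than $u$ of the $\hat x$-coverage (each has $\hat x$-coverage $< 1$), so the residual coverage $\sum_i \hat x_i|S_i\cap R|$ exceeds $2q-u$; since the greedy choice has the best ratio against $R$, its weight is below $2C\,|S_i\cap R|/(2q-u)$. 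Summing this over all greedy steps but the last, the denominators $2q-u$ stay at least $q+1$ and the estimate telescopes to a harmonic \emph{tail} $2C\sum_{i=q+1}^{2q}\tfrac1i \le 2C\ln 2 < 2C$; the one remaining set, being a member of $\S_p$, has weight at most $B$. Hence $\Sigma^{(2)}$ covers at least $q$ elements of $\X_2$ and $w(\Sigma^{(2)}) \le 2C + B$.

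Finally, $\Sigma_p = \Sigma^{(1)}\cup\Sigma^{(2)}$ covers all of $\X_1$ and at least $q$ elements of $\X_2$, hence at least $|\X_1|+q = k_p$ elements in total, and $w(\Sigma_p) \le 2\beta C + 2C + B = (2\beta+2)C + B \le (2\beta+2)\,OPT(\X_p,\S_p,k_p) + B$; this is the $\Sigma_p$ referred to in \Cref{lem:lemma-partial-cover}. I expect the shallow part to be the main obstacle: ordinary greedy partial cover loses a $\Theta(\log q)$ factor, and the role of rescaling $\sigma^{*}$ by $1/\alpha = 2$ is precisely to create ``$2q$ worth'' of fractional coverage, so that greedy only traverses the harmonic range $(q,2q]$ instead of $(0,q]$, which collapses the loss to $\sum_{q<i\le 2q}\tfrac1i = \ln 2 + o(1)$; isolating the final greedy set is what produces the additive $B$, and that term is in turn absorbed by the preprocessing guess in \Cref{lem:lemma-partial-cover}. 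The remaining points --- that greedy never stalls (covering $q$ shallow elements is feasible whenever the residual instance is) and that the telescoping sum is valid --- are routine.
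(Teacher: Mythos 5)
Your proposal is correct, and it reaches the stated $(2\beta+2)\cdot OPT + B$ bound, but for the shallow part it takes a genuinely different route than the paper. Your treatment of the deep set $\X_1$ (scale by $1/\alpha=2$, round via the assumed $\beta$-factor \SC rounding) coincides with the paper. For $\X_2 = \X_p \setminus \X_1$, however, the paper runs an iterative \emph{pairing} procedure (\Cref{alg:rounding}, via \textsc{RoundTwoSets}): it repeatedly pairs two fractionally-open sets and shifts $x$-mass from the less cost-effective to the more cost-effective one, keeping the \LP objective and the sum $\sum_j z_j$ invariant, until every set is open to exactly $\alpha$ or $0$ plus at most one leftover set of weight $\le B$; the delicate point there (\Cref{invar:inv1,invar:inv2} and \Cref{lem:z-feasibility}) is showing $z_j \le 2\alpha = 1$ is never violated, which dictates the anchoring of $S_a$. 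You instead run greedy partial cover on $\X_2$ and analyze it against the scaled fractional solution $\hat x = 2x$: after $u<q$ shallow elements are covered, each absorbed $< 1$ units of $\hat x$-coverage (since they are shallow), so $\sum_i \hat x_i|S_i\cap R| > 2q-u$ while $\sum_i w_i\hat x_i \le 2C$, forcing greedy's ratio to be at least $(2q-u)/(2C)$; the resulting charges telescope only over the harmonic range $(q,2q]$, giving $\le 2C\ln 2 < 2C$ for all but the last chosen set, and the last set contributes the additive $B$. Both approaches charge the additive $B$ to the preprocessing guess exactly as in \Cref{lem:lemma-partial-cover}. Your greedy argument is more elementary (no feasibility invariants to maintain) and in fact yields a slightly sharper constant $(2\beta + 2\ln 2)\cdot OPT + B$; the paper's pairing rounding is more machinery but stays entirely inside the \LP, never needing the harmonic telescoping, and its bookkeeping extends verbatim to the profit-weighted variant. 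One stylistic nit: your observation (a) (that sets meeting $\X_2$ have $x_i<\tfrac12$) is not actually load-bearing in your charging argument --- what you really use is the \emph{element}-shallowness $\sum_{i:e_j\in S_i}\hat x_i < 1$, which is a distinct statement --- so you could drop (a) without loss.
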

\begin{proof}
	Let $\Sigma_p = \Sigma_{p1} \cup \Sigma_{p2}$, where $\Sigma_{p1}$ is the solution obtained by rounding $\sigma_1$, and $\Sigma_{p2} = \Sigma \cup \S_e$ is the solution returned by \Cref{alg:rounding}. By assumption, $\Sigma_{p1}$ covers $\X_1$, and $\Sigma_{p2}$ covers at least $k_p - |\X_1|$ elements from $\X_p \setminus \X_1$ by \Cref{lem:atleast-k}. Therefore, $\Sigma_p$ covers at least $k_p$ elements from $\X_p$.
	
	By assumption, we have that $w(\Sigma_{p1}) \le \beta \cdot cost(\sigma_1) \le \frac{\beta}{\alpha} cost(\sigma^*)$. Also, from \Cref{lem:mainlemma}, we have that $w(\Sigma_{p2}) \le \frac{1}{\alpha } cost(\sigma^*) + B$. We get the claimed result by combining previous two inequalities, setting $\alpha = 1/2$, and noting that $cost(\sigma^*) \le OPT(\X_p, \S_p, k_p)$.
\end{proof}

Let $\S_1 = \{S_i \in \S_p \mid x_i \ge \alpha\}$ be the sets that are opened to more than $\alpha$. Note that without loss of generality, we can assume that $\cup_{S_i \in \S_1} S_i \subseteq \X_1$. If $|\X_1| \ge k_p$, we are done. Otherwise, let $\X \gets \X_p \setminus \X_1$, $\S \gets \S_p \setminus \S_1$, and $k \gets k_p - |\X_1|$. Let $\sigma = (x, z)$ be the \LP solution $\sigma^*$ restricted to the instance $(\X, \S, k)$, that is, $x = (x_i \mid S_i \in \S ), z = (z_j \mid e_j \in \X )$. We show how to round $\sigma$ on the instance $(\X, \S, k)$ to find a collection of sets that covers at least $k$ elements from $\X$. In the following lemma, we show that the \LP solution $\sigma$ is feasible for the instance $(\X, \S, k)$.

\begin{lemma} \label[lem]{lem:feasibility}
	The \LP solution $\sigma = (x, z)$ is feasible for the instance $(\X, \S, k)$. Furthermore, $cost(\sigma) \le cost(\sigma^*)$.
\end{lemma}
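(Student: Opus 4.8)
The plan is to check, one by one, that the restricted tuple $\sigma = (x, z)$ --- with $x = (x_i \mid S_i \in \S)$ and $z = (z_j \mid e_j \in \X)$ --- satisfies all the constraints of the \PSC \LP for the instance $(\X, \S, k)$, and then to read off the cost bound. The box constraints \cref{constr:fractional-z,constr:fractional-x} require no work: the restricted variables form a sub-tuple of those of $\sigma^*$, which already lies in $[0,1]$.

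The first substantive step is the coverage constraint \cref{constr:coverage}. Fix $e_j \in \X = \X_p \setminus \X_1$. I would argue that no set of $\S_1$ contains $e_j$: if $S_i \in \S_1$ then $x_i \ge \alpha$, so every element of $S_i$ is covered to an extent of at least $\alpha$ under $\sigma^*$ and therefore lies in $\X_1$; since $e_j \notin \X_1$, this forces $e_j \notin S_i$. (This is exactly the containment $\bigcup_{S_i \in \S_1} S_i \subseteq \X_1$ remarked on just before the lemma.) Consequently the sets of $\S = \S_p \setminus \S_1$ that contain $e_j$ are precisely the sets of $\S_p$ that contain $e_j$, so
$$\sum_{i : e_j \in S_i,\; S_i \in \S} x_i \;=\; \sum_{i : e_j \in S_i,\; S_i \in \S_p} x_i \;\ge\; z_j,$$
where the inequality is simply the coverage constraint of $\sigma^*$ at $e_j \in \X_p$.

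Next I would handle the partial-coverage constraint \cref{constr:k-cover}: starting from $\sum_{e_j \in \X_p} z_j \ge k_p$, I split off the terms with $e_j \in \X_1$ and use $z_j \le 1$ there, obtaining $\sum_{e_j \in \X} z_j \ge k_p - |\X_1| = k$. That completes feasibility. The cost bound is then immediate: since $\S \subseteq \S_p$ and all $w_i, x_i \ge 0$, we have $cost(\sigma) = \sum_{S_i \in \S} w_i x_i \le \sum_{S_i \in \S_p} w_i x_i = cost(\sigma^*)$.

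I do not expect any real obstacle here; the one place that needs care is the coverage constraint, where one must be sure that deleting the heavily-opened sets $\S_1$ does not reduce the fractional coverage of any element that survives into $\X$ --- and that is precisely what the definition of $\X_1$ (as containing every element of every set in $\S_1$) guarantees.
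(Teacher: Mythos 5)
Your proof is correct and follows essentially the same route as the paper: box constraints are inherited, the coverage constraint for surviving elements reduces to that of $\sigma^*$ because no set of $\S_1$ can contain an element of $\X \setminus \X_1$, the $k$-cover constraint follows by subtracting $\sum_{e_j \in \X_1} z_j \le |\X_1|$, and the cost bound is immediate from $\S \subseteq \S_p$. The only cosmetic difference is that you re-derive the containment $\bigcup_{S_i \in \S_1} S_i \cap \X_p \subseteq \X_1$ inline, whereas the paper cites it as a remark stated just before the lemma.
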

\begin{proof}
	Note that $x_i$ and $z_j$ values are unchanged from the optimal solution $\sigma^*$, therefore the \cref{constr:fractional-x,constr:fractional-z} are satisfied.
	
	 Note that by definition, for any element $e_j \in \X$, $e_j \not\in \cup_{S_{i'} \in \S_1} S_{i'}$, and $e_j \not\in \X_1$. Therefore, by \Cref{constr:coverage}, we have that $\displaystyle \sum_{i:e_j \in S_i} x_j = \sum_{i:e_j \in S_i, S_i \in \S} x_j \ge z_j$. 
	
	As for \Cref{constr:k-cover}, note that 
	\begin{alignat*}{1}
		\sum_{e_j \in \X_p} z_j &\ge k_p \tag{By feasibility of optimal solution $\sigma^*$}
		\\\implies \sum_{e_j \in \X} z_j &\ge k_p - \sum_{e_j \in \X_1} z_j \tag{$\X = \X_p \setminus \X_1$}
		\\\implies \sum_{e_j \in \X} z_j &\ge k_p - |\X_1| \tag{$z_j \le 1$ for $e_j \in \X_1$ by feasibility}
		\\\implies \sum_{e_j \in \X} z_j &\ge k \tag{$k = k_p - |\X_1|$}
	\end{alignat*}
	
	Finally, note that $cost(\sigma) = \sum_{S_i \in \S} w_i x_i \le \sum_{S_i \in \S_p} w_i x_i = cost(\sigma^*)$, because $\S \subseteq \S_p$, and the $x_i$ values are unchanged.
\end{proof}
\subsection{Algorithm for Rounding Shallow Elements}

\ifdefined\DEBUG

Now we describe \Cref{alg:rounding}. Recall that we now have a residual instance $(\X, \S, k)$, and a feasible \LP solution $\sigma = (x, z)$, such that i) For all sets $S_i \in \S$, $x_i \le \alpha$, and ii) For all elements $e_j \in \X, \sum_{i: e_j \in S_i} x_i \le \alpha$. The goal of the algorithm is to find a solution from $\S$, such that it covers at least $k$ elements, and its cost is not too large as compared to the cost of the \LP solution.

To this end, we pair up sets in each iteration, and try to increase the $x_i$ value of one of the sets at the expense of other. We maintain the feasibility and cost of the \LP solution in this process. This is carefully ensured in the procedure {\sc{RoundTwoSets}}.

Initially, we start with copies of the elements and the sets -- $\X_c$ and $\S_c$ respectively. We remove a set $S_i$ from $\S_c$ if its $x_i$ value becomes $0$ or $\alpha$. In the latter case when an $x_i$ value reaches $\alpha$, we remove all the elements it contains from $\X_c$. The sets which are removed from $\S_c$ by virtue of their $x_i$ value becoming $\alpha$ form the output of the rounding algorithm.

In each iteration, we select a set $S_a \in \S_c$ arbitrarily, and pair it up with another arbitrary set $S_b \in \S_c$, and pair it up with $S_a$. Depending on the ``cost-effectiveness'' of the sets $S_a$ and $S_b$ -- $\frac{|\X_c \cap S_a|}{w_a}$ and $\frac{|\X_c \cap S_b|}{w_b}$ respectively -- we call {\sc{RoundTwoSets}}($S_a, S_b, w, \sigma, \Xc, \Sc$), or {\sc{RoundTwoSets}}($S_b, S_a, w, \sigma, \Xc, \Sc$). In the former case, $x_a$ is increased at the expense of $x_b$, and in the latter case $x_b$ is increased at the expense of $x_a$.

Notice that if we paired up sets $S_a$ and $S_b$ arbitrarily and rounded using {\sc{RoundTwoSets}}, the feasibility of the \LP may not be maintained. In particular, we cannot ensure that for all elements $e_j \in \X_c$, $z_j \le 1$. To this end, once we choose a set $S_a \in \S_c$ arbitrarily, we fix it to be one of the paired sets until it is removed from $\S_c$ (because of its $x_a$ value becoming $0$ or $\alpha$), or it is the only remaining set in $\S_c$. By doing this, we maintain the following two invariants: 
\begin{enumerate}
	\item Let $\X_o = \{e_j \in \X_c \mid z_j \ge \alpha \}$. During the execution of while loop of \Cref{lin:outer-whileloop}, there exists a set $S_a \in \S_c$ such that $\X_o \subseteq S_c$. \label[invar]{invar:inv1}
	\item For all sets $S_i \in \S_c \setminus \{S_a\}$, the $x_i$ values are unchanged, unless and until the set $S_i$ is paired up with $S_a$. \label[invar]{invar:inv2}
\end{enumerate}
In \Cref{lem:z-feasibility}, we show that these invariants imply that the constraint is maintained. 

The invariants are trivially true before the start of the while loop. 

\fi

We have an \LP solution $\sigma$ for the \PSC instance $(\X, \S, k)$. Note that for any $S_i \in \S$, $x_i < \alpha$, and for any $e_j \in \X, \alpha > \sum_{i: e_j \in S_i} x_i \ge z_j$, i.e. each element is \emph{shallow}. We now describe \Cref{alg:rounding}, which rounds $\sigma$ to an integral solution to the instance $(\X, \S, k)$. At the beginning of \Cref{alg:rounding}, we initialize $\S_c$, the collection of ``unresolved'' sets, to be $\S$; and $\X_c$, the set of ``uncovered'' elements, to be $\X$.

At the heart of the rounding algorithm is the procedure {\sc{RoundTwoSets}}, which takes input two sets $S_1, S_2 \in \S_c$, and rounds the corresponding variables $x_1, x_2$ such that either $x_1$ is increased to $\alpha$, or $x_2$ is decreased to $0$ (cf. \Cref{lem:weight-maintained} part 3). A set is removed from $\S_c$ if either of these conditions is met. In addition, if $x_i$ reaches $\alpha$, then the set $S_i$ is added to $\Sigma$, which is a part of the output, and all the elements in $S_i$ are added to the set $\Xi$. At a high level, the goal of \Cref{alg:rounding} is to resolve all of the sets in either way, while maintaining the cost and the feasibility of the \LP.

\begin{algorithm}[H]
	\small
	\caption{RoundLP$(\X, \S, w, k, \sigma)$} \label{alg:rounding}
	\begin{algorithmic}[1]
		\State $\Sigma \gets \emptyset$,  $\Xi \gets \emptyset$
		\State $\Xc \gets \X$, $\Sc \gets \S$
		\While{$|\Sc| \ge 2$} \label{lin:outer-whileloop}
		\State $S_a \gets$ an arbitrary set from $\Sc$. \label{lin:Sa-chosen}
		\While{$0 < x_a < \alpha$ and $|\Sc \setminus \{S_a\}| \ge 1$} \label{lin:inner-whileloop}
		\State $S_b \gets$ an arbitrary set from $\Sc \setminus \{S_a\}$. \label{lin:Sb-chosen}
		
		\If{$\frac{|\Xc \cap S_a|}{w_a} \ge \frac{|\Xc \cap S_b|}{w_b}$} \label{lin:sets-compared}
		\State $(x_a, x_b, z) \gets $\Call{RoundTwoSets}{$S_a, S_b, w, \sigma, \Xc, \Sc$}
		\If{$x_b = 0$}
		\State $\Sc \gets \Sc \setminus \{S_b\}$
		\EndIf
		\If{$x_a = \alpha$}
		\State $\Xi \gets \Xi \cup S_a$, $\Xc \gets \Xc \setminus S_a$.
		\State $\Sigma \gets \Sigma \cup \{S_a\}, \Sc \gets \Sc \setminus \{S_a\}$
		\EndIf
		\Else
		\State $(x_b, x_a, z) \gets $\Call{RoundTwoSets}{$S_b, S_a, w, \sigma, \Xc, \Sc$}
		\If{$x_a = 0$}
		\State $\Sc \gets \Sc \setminus \{S_a\}$
		\State $S_a \gets S_b$ \label{lin:Sa-chosen-2}
		\EndIf
		\If{$x_b = \alpha$}
		\State $\Xi \gets \Xi \cup S_b$, $\Xc \gets \Xc \setminus S_b$.
		\State $\Sigma \gets \Sigma \cup \{S_b\}, \Sc \gets \Sc \setminus \{S_b\}$
		\EndIf
		\EndIf
		\EndWhile
		\EndWhile \label{lin:end-outerwhile}
		\State $\S_e \gets \S_c$ \label{lin:last-set}
		\State \Return $\Sigma \cup \S_e$ \vspace{2mm}
		\Statex \hrule
		\Function{RoundTwoSets}{$S_1, S_2, w, \sigma, \Xc, \Sc$} \label{fn:round2sets}
		\State $\delta \gets \min\{\alpha - x_1, \frac{w_2}{w_1} \cdot x_2\}$
		\State $x_1 \gets x_1 + \delta$
		\State $x_2 \gets x_2 - \frac{w_1}{w_2} \cdot \delta$
		\State For all elements $e_j \in \Xc$, update $z_j \gets \sum_{i: e_j \in S_i} x_i$
		\State \Return $(x_1, x_2, z)$
		\EndFunction
	\end{algorithmic}
\end{algorithm}

Given the procedure {\sc{RoundTwoSets}}, we choose the pairs of sets to be rounded carefully. In \Cref{lin:Sa-chosen}, we pick a set $S_a \in \S_c$ arbitrarily, and then we pair it up with another set $S_b \in \S_c$ chosen arbitrarily in \Cref{lin:Sb-chosen}. To ensure that the constraint \Cref{constr:k-cover} is maintained, we carefully determine whether to increase $x_a$ and decrease $x_b$ in {\sc{RoundTwoSets}}, or vice versa. Thinking of $\frac{|\X_c \cap S_a|}{w_a}$, and $\frac{|\X_c \cap S_b|}{w_b}$ as the ``cost-effectiveness'' of the sets $S_a$ and $S_b$ respectively, we increase $x_a$ at the expense of $x_b$, if $S_a$ is more cost-effective than $S_b$ or vice versa.

Notice that if we paired up sets $S_a$ and $S_b$ arbitrarily and rounded using {\sc{RoundTwoSets}}, the feasibility of the \LP may not be maintained. In particular, we cannot ensure that for all elements $e_j \in \X_c$, $z_j \le 1$. To this end, we maintain the following two invariants: 
\begin{enumerate}
	\item Let $\X_o = \{e_j \in \X_c \mid z_j \ge \alpha \}$. During the execution of while loop of \Cref{lin:outer-whileloop}, the elements of $\X_o$ are contained in the set $S_a \in \S_c$, that is chosen in \Cref{lin:Sa-chosen} or \Cref{lin:Sa-chosen-2}. \label[invar]{invar:inv1}
	\item Fix any set $S_i \in \S_c \setminus \{S_a\}$. The $x_i$ value is unchanged since the beginning of the algorithm until the beginning of the current iteration of while loop of \Cref{lin:inner-whileloop}; the $x_i$ value can change in the current iteration only if $S_i$ is paired up with $S_a$.  \label[invar]{invar:inv2}
\end{enumerate}
In \Cref{lem:z-feasibility}, we show that these invariants imply that \Cref{constr:fractional-z} is maintained. 

The invariants are trivially true before the start of the while loop. Let $S_a \in \Sc$ be a set chosen in \Cref{lin:Sa-chosen}, or \Cref{lin:Sa-chosen-2}. During the while loop, we maintain the invariants by pairing up the $S_a$ with other arbitrary sets $S_b$, until $S_a$ is removed from $\S_c$ in one of the two ways; or until it is the last set remaining. It is easy to see that \Cref{invar:inv2} is maintained -- we argue about \Cref{invar:inv1} in the subsequent paragraphs. From the second invariant, we have that if the change in the $z_j$ value for any element $e_j \in \X_c$ is positive, then it due to the change in the $x_a$ value corresponding to $S_a$ (recall that when the $x_i$ value of a set $S_i \in \S_c$ increases to $\alpha$, all the elements contained in it are removed from $\X_c$).

Now we describe in detail how the first invariant is being maintained in the course of the algorithm. Consider the first case, i.e. in {\sc{RoundTwoSets}}, we increase $x_a$ and decrease $x_b$. If after this, $x_b$ becomes $0$, then we remove $S_b$ from $\S_c$. If, on the other hand, $x_a$ increases to $\alpha$, then all the elements in $\X_c \cap S_a$ are covered to an extent of at least $\alpha$, and so we remove $S_a$ from $\S_c$ and $S_a \cap \X_c$ from $\X_c$. In the first case, the set $\X_o$ continues to be a subset of $S_a$, while in the second case, it becomes empty. Thus, \Cref{invar:inv1} is maintained automatically in both cases.

In the second case, in {\sc{RoundTwoSets}}, $x_a$ is decreased and $x_b$ is increased. This case is a bit more complicated, because $z_j$ values of elements $e_j \in S_b$ are being increased by virtue of increase in $x_b$. Therefore, we need to explicitly maintain \Cref{invar:inv1}. If $x_b$ reaches $\alpha$, then $S_b$ is removed from $\S_c$ and all the elements covered by $S_b$ are removed from $\X_c$ (and thus the invariant is maintained). On the other hand, if $x_a$ reaches $0$, then the net change in the $z_j$ values for the elements $e_j \in S_a \setminus S_b$ is non-positive -- this follows from \Cref{invar:inv2}, as the $x_i$ values of the sets in $\S_c \setminus \{S_a, S_b\}$ are unchanged, and $x_a$ is now zero. Therefore, the set $\X_o \cap (S_a \setminus S_b)$ becomes empty. However, $\X_o \cap S_b$ may be non-empty because of the increase in $x_b$. Therefore, we rename $S_b$ as $S_a$, and continue pairing it up with other sets. Notice that we have maintained \Cref{invar:inv1} although the set $S_a$ has changed.

From the above discussion, we have the following result.

\begin{claim} \label{cl:z-excess-set}
	Throughout the execution of the while loop of \Cref{lin:outer-whileloop}, \Cref{invar:inv1,invar:inv2} are maintained.
\end{claim}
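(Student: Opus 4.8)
The plan is to prove the claim by induction on the number of completed iterations of the inner while loop of \Cref{lin:inner-whileloop} (equivalently, on the number of calls to \textsc{RoundTwoSets}), where each pass through the outer while loop that selects a fresh $S_a$ via \Cref{lin:Sa-chosen} is treated as one of these steps. For the base case, before any iteration we have $\Xc=\X$ and $\Sc=\S$, no $x_i$ has been modified (so \Cref{invar:inv2} holds vacuously), and since $\sigma$ is shallow we have $z_j\le\sum_{i:e_j\in S_i}x_i<\alpha$ for every $e_j\in\X$; hence $\X_o=\emptyset$ and \Cref{invar:inv1} holds trivially.

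For the inductive step, assume both invariants hold at the start of an iteration, with current chosen set $S_a$ and current $\Sc,\Xc$, and analyze the effect of one iteration. By the choice of $\delta$ in \textsc{RoundTwoSets}, each call ends with at least one of ``the increased variable reaches $\alpha$'' and ``the decreased variable reaches $0$'' holding, so there are four sub-cases, obtained by combining the branch taken at \Cref{lin:sets-compared} with which termination condition fires. Checking \Cref{invar:inv2} is the easy part: only $x_a$ and $x_b$ are ever touched in an iteration, $S_b$ is by construction the set paired with $S_a$, and after the call $S_b$ is either deleted from $\Sc$ (when its variable hits $0$ or $\alpha$) or promoted to be the new $S_a$ (in the branch-(2) sub-case where $x_a$ drops to $0$); thus no set other than the current $S_a$ survives in $\Sc$ with an altered $x$-value, which together with the induction hypothesis (IH) yields \Cref{invar:inv2} at the start of the next iteration.

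The substance is \Cref{invar:inv1}. In the branch where $x_a$ is increased, the only $z_j$'s that can grow are those of elements of $S_a$: if $x_b$ drops to $0$ then $\X_o$ stays inside the unchanged $S_a$ by the IH; if instead $x_a$ reaches $\alpha$ then $S_a$ together with all of $S_a\cap\Xc$ is removed, so the new $\X_o$ is empty and is contained in whatever $S_a$ the outer loop picks next. In the branch where $x_b$ is increased (so $x_a$ is decreased), if $x_b$ reaches $\alpha$ then $S_b$ and $S_b\cap\Xc$ leave, and for the remaining elements of $S_a$ the value $z_j$ only decreased, so $\X_o$ again stays inside $S_a$. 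The remaining sub-case — $x_a$ drops to $0$ and $S_a$ is replaced by $S_b$ — needs care: we must show every $e_j\in\X_o$ lies in the new $S_a=S_b$. For $e_j\in S_b$ this is immediate, and for $e_j\notin S_a\cup S_b$ the value $z_j$ is unchanged and $e_j\notin\X_o$ by the IH, so the only elements to worry about are those in $(S_a\setminus S_b)\cap\Xc$.

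This last point is what I expect to be the main obstacle. The resolution is the observation that for any $e_j$ still in $\Xc$, no set containing $e_j$ has ever reached the value $\alpha$ (otherwise $e_j$ would have been removed from $\Xc$ and never returned), so every set that has already left $\Sc$ and contains $e_j$ has $x$-value $0$ and contributes nothing; consequently $z_j=\sum_{i:\,e_j\in S_i,\ S_i\in\Sc}x_i$. After the call $x_a=0$, and by \Cref{invar:inv2} every set in $\Sc\setminus\{S_a,S_b\}$ still carries its original $x$-value, while $e_j\notin S_b$; therefore $z_j=\sum_{i:\,e_j\in S_i,\ S_i\in\Sc\setminus\{S_a,S_b\}}x_i^{\mathrm{orig}}\le z_j^{\mathrm{orig}}<\alpha$ by shallowness, so $e_j\notin\X_o$. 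Hence $\X_o\subseteq S_b$, completing the inductive step; everything else is routine bookkeeping over which elements and sets are removed in each branch.
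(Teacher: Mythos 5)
Your proof follows the same inductive case analysis as the paper's, and your treatment of the delicate sub-case of \Cref{invar:inv1} (the branch where $x_a$ is decreased to $0$ and $S_b$ is promoted) is spelled out more carefully than the paper's one-sentence assertion, using the right observation that for a surviving $e_j \in \Xc$ every set already removed from $\Sc$ has $x$-value zero.

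There is, however, a gap in your argument for \Cref{invar:inv2}, and the same imprecision is present in the paper, which disposes of this invariant with ``it is easy to see.'' You claim that after the call, ``$S_b$ is either deleted from $\Sc$ or promoted to be the new $S_a$; thus no set other than the current $S_a$ survives in $\Sc$ with an altered $x$-value.'' This fails in the branch-1 sub-case where $\delta = \alpha - x_a < (w_b/w_a)\,x_b$: then $x_a$ jumps to $\alpha$ and $S_a$ is removed, but $x_b$ is decreased to a value strictly between $0$ and $\alpha$, so $S_b$ is neither deleted nor promoted. When the outer loop next selects a fresh $S_a'$, the set $S_b \in \Sc \setminus \{S_a'\}$ carries a non-original $x$-value, contradicting \Cref{invar:inv2} as literally stated. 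The repair is to weaken \Cref{invar:inv2} to: \emph{for every $S_i \in \Sc \setminus \{S_a\}$, the value $x_i$ is at most its original value}. This version does follow from your case analysis --- any such $S_i$ that is touched while outside the role of $S_a$ either has its value decreased (branch~1 as $S_b$), or is the $S_b$ of a branch-2 call and is then immediately removed from $\Sc$ or promoted to $S_a$, so it never survives in $\Sc \setminus \{S_a\}$ with an increased value. This weaker statement is all that your shallowness argument for \Cref{invar:inv1}, and the paper's \Cref{lem:z-feasibility}, actually use, since both arguments only rely on upper bounds on the $x_i$'s.
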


Finally, if at the end of while loop of \Cref{lin:outer-whileloop}, we set $\S_e$ to be $\S_c$, and add it to our solution. Note that at this point, $\S_c$ can be empty, or it may contain one set. We show that in either case, the resulting solution $\Sigma \cup \S_e$ covers at least $k$ elements.

\subsection{Analysis}
In this section, we analyze the behavior of \Cref{alg:rounding}. In the following lemma, we show that in each iteration, we make progress towards rounding while maintaining the cost of the \LP solution.
\begin{lemma} \label[lem]{lem:weight-maintained}
Let $\sigma = (x, z), \sigma' = (x', z')$ be the \LP solutions just before and after the execution of  {\sc{RoundTwoSets}}$(S_1, S_2, w, \sigma, \Xc, \Sc)$ for some sets $S_1, S_2 \in \Sc$ in some iteration of the algorithm, such that $\sigma$ is a feasible solution to the \LP. Then,
\begin{enumerate}
	\item $cost(\sigma) = cost(\sigma')$.
	\item $\sum_{e_j \in \Xc} z'_j \ge \sum_{e_j \in \Xc} z_j$.
	\item Either $x_1' = \alpha$ or $x_2' = 0$ (or both).
\end{enumerate}
\end{lemma}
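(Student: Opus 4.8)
The plan is to analyze the effect of one call to {\sc{RoundTwoSets}}$(S_1, S_2, w, \sigma, \Xc, \Sc)$ directly from its definition. Recall that it sets $\delta = \min\{\alpha - x_1, \frac{w_2}{w_1} x_2\}$, updates $x_1' = x_1 + \delta$ and $x_2' = x_2 - \frac{w_1}{w_2}\delta$, leaves all other $x_i$ unchanged, and recomputes $z_j' = \sum_{i: e_j \in S_i} x_i'$ for all $e_j \in \Xc$. The three parts are then essentially independent computations.

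For part~1, I would observe that the only sets whose $x$-values change are $S_1$ and $S_2$, so the change in cost is $w_1 \delta + w_2 \cdot (-\frac{w_1}{w_2}\delta) = w_1 \delta - w_1 \delta = 0$; hence $cost(\sigma') = cost(\sigma)$. For part~3, I would argue by the case analysis built into the definition of $\delta$: if the minimum is achieved by $\alpha - x_1$, then $x_1' = x_1 + (\alpha - x_1) = \alpha$; if it is achieved by $\frac{w_2}{w_1} x_2$, then $x_2' = x_2 - \frac{w_1}{w_2}\cdot\frac{w_2}{w_1} x_2 = 0$. (One should note $\delta \ge 0$ since $x_1 < \alpha$ and $x_2 \ge 0$, so the update is well defined and the values stay in range.) For part~2, the key point is that the algorithm calls {\sc{RoundTwoSets}} with $S_1, S_2$ in the roles such that $S_1$ is the more cost-effective of the pair, i.e. $\frac{|\Xc \cap S_1|}{w_1} \ge \frac{|\Xc \cap S_2|}{w_2}$; this is exactly the comparison on \Cref{lin:sets-compared}. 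Then
\[
\sum_{e_j \in \Xc} z_j' - \sum_{e_j \in \Xc} z_j = \sum_{e_j \in \Xc} \sum_{i: e_j \in S_i}(x_i' - x_i) = |\Xc \cap S_1|\,\delta - |\Xc \cap S_2|\,\frac{w_1}{w_2}\delta.
\]
Factoring out $w_1 \delta \ge 0$, this equals $w_1\delta\left(\frac{|\Xc \cap S_1|}{w_1} - \frac{|\Xc \cap S_2|}{w_2}\right) \ge 0$ by the cost-effectiveness inequality, which gives part~2.

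The main subtlety — the only place where the statement is doing real work rather than bookkeeping — is that part~2 is about $\sum_{e_j \in \Xc} z_j$, the total fractional coverage restricted to the \emph{currently uncovered} elements $\Xc$, and it is precisely this quantity (not the global $\sum_{e_j \in \X} z_j$) that the \PSC constraint \Cref{constr:k-cover} must track as elements migrate out of $\Xc$. So I would be careful that the sum is taken over $\Xc$ as it stands at the moment of the call, that $z_j$ for $e_j \notin \Xc$ is irrelevant here, and that the cost-effectiveness comparison on \Cref{lin:sets-compared} uses the current $\Xc$ as well — the two $\Xc$'s are the same set, which is what makes the inequality line up. Everything else is a one-line verification from the pseudocode of {\sc{RoundTwoSets}}.
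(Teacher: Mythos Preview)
Your proof is correct and follows essentially the same approach as the paper: the same cost computation for part~1, the same case split on which term attains the minimum in $\delta$ for part~3, and the same use of the cost-effectiveness inequality $\frac{|\Xc\cap S_1|}{w_1}\ge\frac{|\Xc\cap S_2|}{w_2}$ to bound the net change in $\sum_{e_j\in\Xc} z_j$ for part~2. The only cosmetic difference is that the paper names $A=S_1\cap\Xc$, $B=S_2\cap\Xc$ and phrases the change elementwise, whereas you swap the double sum directly; the arithmetic is identical.
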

\begin{proof}
	\begin{enumerate}
		\item Note that the $x_i$ variables corresponding to all the sets $S_i \notin \{S_1, S_2\}$ remain unchanged. The net change in the cost of the \LP solution is $$w_1 \cdot (x_1' - x_1) + w_2 \cdot (x_2' - x_2) = w_1 \cdot \delta - w_2 \cdot \lr{\frac{w_1}{w_2} \cdot \delta} = 0.$$
		
		\item Let $A = S_1 \cap \Xc$, and $B = S_2 \cap \Xc$. $z'_j = z_j$ for all elements $e_j \not\in A \cup B$, i.e. $z_j$ values are modified only for the elements $e_j \in A \cup B$. 
		
		For $|A|$ elements $e_j \in A$, $z_j$ value is increased by $\delta$ by virtue of increase in $x_1$. Similarly, for $|B|$ elements $e_{j'} \in B$, $z_{j'}$ value is decreased by $\frac{w_1}{w_2} \cdot \delta$. However by assumption, we have that $\frac{|A|}{w_1} \ge \frac{|B|}{w_2}$. Therefore, the net change in the sum of $z_j$ values is
		$$|A| \cdot \delta - |B| \cdot \lr{\frac{w_1}{w_2} \cdot \delta } \ge |A| \cdot \delta - \lr{\frac{|A|}{w_1} \cdot w_1} \cdot \delta \ge 0.$$ 
		
		\item The value of $\delta$ is chosen such that $\delta = \min \{ \alpha - x_1, \frac{w_2}{w_1} \cdot x_2\}$. If $\delta = \alpha - x_1 \le  \frac{w_2}{w_1} \cdot x_2$, then $x_1' = x_1 - (\alpha - x_1) = \alpha$, and $x_2' = x_2 - \frac{w_1}{w_2} \cdot (\alpha - x_1) \ge x_2 - x_2 = 0$. In the other case when $\delta = \frac{w_2}{w_1} \cdot x_2 < (\alpha - x_1)$, we have that $x_1' = x_1 + \frac{w_2}{w_1} \cdot x_2 < x_1 + (\alpha - x_1) = \alpha$, and $x_2' = x_2 - \frac{w_2}{w_1} \cdot \frac{w_1}{w_2} \cdot x_2 = 0$.
	\end{enumerate}
\end{proof}

\begin{remark}
	Note that \Cref{lem:weight-maintained} (in particular Part 2 of \Cref{lem:weight-maintained}) alone is not sufficient to show the feasibility of the \LP after an execution of {\sc{RoundTwoSets}}---we also have to show that $z_j' \le 1$. This is slightly involved, and is shown in \Cref{lem:z-feasibility} with the help of \Cref{invar:inv1,invar:inv2}.
\end{remark}

\begin{corollary}
	\Cref{alg:rounding} runs in polynomial time.
\end{corollary}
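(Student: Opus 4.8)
The plan is to bound the total number of iterations of the two nested while loops of \Cref{alg:rounding} by $O(m)$, where $m = |\S|$, and to observe that each iteration costs only $\mathrm{poly}(n,m)$ time, where $n = |\X|$. The key observation is that \emph{every call to} \textsc{RoundTwoSets} \emph{deletes a set from} $\Sc$. By Part~3 of \Cref{lem:weight-maintained}, after a call \textsc{RoundTwoSets}$(S_1,S_2,\dots)$ we have $x_1' = \alpha$ or $x_2' = 0$ (possibly both). Inspecting the two invocations in \Cref{alg:rounding}: in the branch where $S_a$ is at least as cost-effective as $S_b$, we call \textsc{RoundTwoSets}$(S_a,S_b,\dots)$, and then either $x_a = \alpha$ (so $S_a$ is removed from $\Sc$ and placed in $\Sigma$) or $x_b = 0$ (so $S_b$ is removed from $\Sc$); in the other branch we call \textsc{RoundTwoSets}$(S_b,S_a,\dots)$, and then either $x_b = \alpha$ (so $S_b$ is removed) or $x_a = 0$ (so $S_a$ is removed and $S_b$ is renamed to $S_a$). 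In all cases $|\Sc|$ strictly decreases. Since exactly one call to \textsc{RoundTwoSets} occurs per execution of the body of the inner while loop (\Cref{lin:inner-whileloop}), and both loops run only while $|\Sc| \ge 2$, starting from $|\Sc| = m$, the inner loop body executes at most $m-1$ times over the entire run.

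Next I would rule out the outer loop (\Cref{lin:outer-whileloop}) spinning without the inner loop making progress. We may assume without loss of generality that $x_i > 0$ for every $S_i \in \S$: a set with $x_i = 0$ appears with coefficient $0$ in every \LP constraint and in the objective, so it can be discarded up front. Granting this, no set ever remains in $\Sc$ with $x_i \notin (0,\alpha)$ --- \textsc{RoundTwoSets} keeps every $x$-value in $[0,\alpha]$ by the choice of $\delta$, and the moment an $x_i$ reaches $0$ or $\alpha$ the corresponding set is removed from $\Sc$. Hence whenever $S_a$ is chosen in \Cref{lin:Sa-chosen} while $|\Sc| \ge 2$, the inner-loop guard $0 < x_a < \alpha$ holds, so the inner loop runs at least once and removes at least one set. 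It follows that the outer loop iterates at most $m$ times, and the total number of iterations of both loops together is $O(m)$.

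Finally, each iteration is cheap. A call to \textsc{RoundTwoSets} computes $\delta$ and updates $x_1,x_2$ in $O(1)$ time and then recomputes $z_j = \sum_{i: e_j \in S_i} x_i$ for all $e_j \in \Xc$ in $O(nm)$ time (or incrementally in $O(|S_1|+|S_2|)$ time); each inner-loop body additionally evaluates the two cost-effectiveness ratios in $O(n)$ time and performs $O(1)$ updates to $\Sc, \Xc, \Sigma, \Xi$, and the outer-loop bookkeeping is likewise polynomial. Multiplying the $O(m)$ iteration bound by this polynomial per-iteration cost yields a total running time of $\mathrm{poly}(n,m)$, proving the corollary. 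I expect no real obstacle here; the only point needing care is the harmless preprocessing that discards $x_i = 0$ sets, used solely to guarantee that the outer loop cannot stall, and everything else is a direct accounting argument.
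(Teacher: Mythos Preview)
Your proposal is correct and follows essentially the same approach as the paper: both argue that each execution of the inner while loop invokes \textsc{RoundTwoSets} once, and by Part~3 of \Cref{lem:weight-maintained} this forces at least one set to be removed from $\Sc$, so there are at most $O(|\S|)$ total inner iterations, each taking $O(|\S|\cdot|\X|)$ time. Your write-up is in fact more careful than the paper's in explicitly ruling out the possibility that the outer loop stalls on a set with $x_a = 0$; the paper's proof glosses over this corner case, whereas your harmless preprocessing (discarding sets with $x_i = 0$) cleanly handles it.
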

\begin{proof}
	In each iteration of the inner while loop \Cref{lin:inner-whileloop}, {\sc{RoundTwoSets}} is called on some two sets $S_1, S_2 \in \Sc$, and as such from \Cref{lem:weight-maintained}, either $x_1' = \alpha$ or $x_2' = 0$. Therefore, at least one of the sets is removed from $\Sc$ in each iteration. Therefore, there are at most $O(|\S|)$ iterations of the inner while loop. It is easy to see that each execution of {\sc{RoundTwoSets}} takes $O(|\S| \cdot |\X|)$ time.
\end{proof}

In the following Lemma, we show that \Cref{constr:fractional-z} is being maintained by the algorithm. This, when combined with \Cref{lem:weight-maintained}, shows that we maintain the feasibility of the \LP at all times.
\begin{lemma} \label[lem]{lem:z-feasibility}
	During the execution of \Cref{alg:rounding}, for any element $e_j \in \Xc$, we have that $z_j \le 2 \alpha$. By the choice of range of $\alpha$, the feasibility of the \LP is maintained.
\end{lemma}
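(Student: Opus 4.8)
The plan is to show that the quantity $z_j = \sum_{i: e_j \in S_i} x_i$ for an element $e_j \in \Xc$ never exceeds $2\alpha$ during the run of \Cref{alg:rounding}, by a careful accounting of where increases to $z_j$ can come from, using \Cref{invar:inv1,invar:inv2}. First I would record the starting point: at initialization $z_j = \sum_{i: e_j \in S_i} x_i < \alpha$ for every $e_j \in \X$, since each such element is shallow. So it suffices to bound the cumulative increase in $z_j$ over the life of the algorithm by $\alpha$. By \Cref{invar:inv2}, the $x_i$ value of any set $S_i \ne S_a$ in $\Sc$ is frozen until $S_i$ is paired with the current "anchor" set $S_a$; hence the only way $z_j$ can strictly increase is through an increase of $x_a$, the variable of the currently chosen anchor. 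So the whole question reduces to: how much can $x_a$ grow while a given element $e_j$ is still in $\Xc$?

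The key observation is the while-loop guard in \Cref{lin:inner-whileloop} together with the bookkeeping in the algorithm: the anchor $S_a$ is processed only while $0 < x_a < \alpha$, and the moment $x_a$ reaches $\alpha$, $S_a$ is placed in $\Sigma$ and $S_a \cap \Xc$ is deleted from $\Xc$. Thus while $e_j$ remains in $\Xc$, the contribution of the current anchor to $z_j$ is at most $\alpha$ (it rises from some nonnegative value, capped at $\alpha$). The subtlety — and I expect this to be the main obstacle — is that the anchor $S_a$ can \emph{change} during the outer loop (line \Cref{lin:Sa-chosen-2}, where $S_b$ is renamed $S_a$ after the old anchor's $x$-value hits $0$), and also a fresh anchor is chosen each time the outer while loop iterates. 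So in principle $z_j$ could be pushed up by a succession of different anchors. I would argue this cannot accumulate beyond $2\alpha$ as follows: use \Cref{invar:inv1}, which says all elements of $\X_o = \{e_j \in \Xc \mid z_j \ge \alpha\}$ are contained in the \emph{current} anchor $S_a$. Once $e_j$ has been pushed to $z_j \ge \alpha$ by some anchor, $e_j \in \X_o$, so $e_j \in S_a$ for the current anchor; as long as it stays in $\Xc$ it must remain inside whatever the anchor currently is (the invariant is maintained across anchor changes). Therefore any further increase to $z_j$ comes from increases to $x_a$ for an anchor that actually contains $e_j$, and such an anchor can raise $x_a$ from at most $\alpha$ (when $e_j$ entered $\X_o$, only one anchor's contribution was responsible, bounded by $\alpha$) to at most $\alpha$, i.e. adds at most $\alpha$ more — and the instant $x_a$ hits $\alpha$, $e_j$ is removed from $\Xc$ because $e_j \in S_a$. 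Combining, $z_j$ stays below $2\alpha$ throughout.

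To make the second paragraph's argument rigorous I would phrase it as an induction on the iterations of the inner while loop, with the inductive hypothesis "for every $e_j \in \Xc$, $z_j \le 2\alpha$, and moreover if $z_j \ge \alpha$ then $e_j$ belongs to the current anchor $S_a$ and the contribution $x_a$ of that anchor is what carries $z_j$ above $\alpha$." The base case is the initialization (all $z_j < \alpha$). For the inductive step, a call to {\sc RoundTwoSets}$(S_a, S_b, \dots)$ or {\sc RoundTwoSets}$(S_b, S_a, \dots)$ changes $z_j$ only for $e_j \in (S_a \cup S_b) \cap \Xc$; for $e_j \in S_a$ the increase is exactly $\delta \le \alpha - x_a$, so $x_a$ after the step is still $\le \alpha$ and $z_j \le (\text{part not from }S_a) + x_a \le \alpha + \alpha = 2\alpha$, using that the non-anchor part of $z_j$ was $< \alpha$ (the element was not in $\X_o$ via a non-anchor before, by \Cref{invar:inv1} and \Cref{invar:inv2}); for $e_j \in S_b \setminus S_a$ in the "decrease $x_a$" branch the $z_j$ value only drops; for $e_j \in S_b \setminus S_a$ in the "increase $x_a$" branch $S_b$ is $S_a$'s partner and increases are again bounded by the cap on $x_b$ only in the symmetric sub-case — here I would invoke \Cref{invar:inv1} to rule out $e_j$ simultaneously sitting in $\X_o$ and in a non-anchor set, which is exactly the scenario the invariant was designed to exclude. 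Finally, since we took $\alpha \le 1/2$, we get $z_j \le 2\alpha \le 1$, so \Cref{constr:fractional-z} holds; combined with \Cref{lem:weight-maintained} parts 1 and 2 (cost preserved, $\sum_{\Xc} z_j$ nondecreasing) and \Cref{lem:feasibility}, the \LP solution stays feasible throughout the execution of \Cref{alg:rounding}.
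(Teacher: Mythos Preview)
Your proposal is correct and follows essentially the same approach as the paper: both argue that initially $z_j < \alpha$, that by \Cref{invar:inv2} any net positive change to $z_j$ must come from the current anchor $S_a$, and that the anchor's contribution is capped at $\alpha$, giving $z_j \le 2\alpha$. Your treatment is more expansive---you explicitly worry about anchor handoffs and propose an induction on inner-loop iterations---whereas the paper dispatches the lemma in a short paragraph by directly invoking \Cref{invar:inv1,invar:inv2}; but the underlying decomposition (initial coverage at most $\alpha$ plus anchor contribution at most $\alpha$) is identical.
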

\begin{proof}
	At the beginning of the algorithm, we have that $z_j \le \alpha$ for all elements $e_j \in \X_c = \X$. Now at any point in the while loop, consider the set $\X_o = \{e_j \in \Xc \mid  z_j > \alpha\}$ as defined earlier. For any element $e_j \in \Xc \setminus \X_o$, the condition is already met, therefore we need to argue only for the elements in $\X_o$. We know by \Cref{invar:inv1} that there exists a set $S_a \in \S_c$ such that $\X_o \subseteq S_a$.
	
	By \Cref{invar:inv2}, the $x_i$ values of all sets $S_i \in \S_c \setminus \{S_a\}$ are unchanged, and therefore for all elements $e_j \in \X_c$, the net change to the $z_j$ variable is positive only by the virtue of increase in the $x_a$ value. However, the net increase in the $x_a$ value is at most $\alpha$ because a set $S_i$ is removed from $\S_c$ as soon as its $x_i$ value reaches $\alpha$. Accounting for the initial $z_j$ value which is at most $\alpha$, we conclude that $z_j \le 2 \alpha$.
\end{proof}

Note that after the end of while loop (\Cref{lin:end-outerwhile}), we must have $|\S_c| \le 1$. That is in \Cref{lin:last-set}, we either let $\S_e \gets \S_c = \emptyset$, or $\S_e \gets \S_c = \{S_i\}$ for some set $S_i \in \S$.

To state the following claim, we introduce the following notation. Let $\sigma' = (x', z')$ be the \LP solution at the end of \Cref{alg:rounding}. Let $\S_r = \S \setminus \Sigma$, where $\Sigma$ is the collection at the end of the while loop of \Cref{alg:rounding}, and let $\X_r = \X \setminus \Xi$. Note that any element $e_j \in \X_r$ is contained only in the sets of $\S_r$. Finally, let $Z_r = \sum_{e_j \in \X_r} z'_j$.
\begin{claim} \label{cl:last-set}
	If $\S_e \neq \emptyset$, then at least $Z_r$ elements are covered by $\S_e$.
\end{claim}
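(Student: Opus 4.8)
The plan is to show that, when $\S_e\neq\emptyset$, it is a single set $S_i$ that already covers at least $Z_r$ elements of $\X_r$. First I would invoke the observation (stated just before the claim) that the outer while loop of \Cref{lin:outer-whileloop} exits only once $|\Sc|\le 1$; hence $\S_e=\Sc$ is either empty or a singleton, and under the hypothesis $\S_e\neq\emptyset$ we have $\S_e=\{S_i\}$ for exactly one $S_i\in\S$. Since this $S_i$ survives in $\Sc$ to the very end, it is never moved into $\Sigma$, so $S_i\in\S_r=\S\setminus\Sigma$.

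The key structural step I would then establish is that at termination every set of $\S_r$ other than $S_i$ has final $x$-value equal to $0$. The argument: such a set $S_{i'}\in\S_r\setminus\{S_i\}$ is not in $\Sigma$ (so its $x$-value never reached $\alpha$) and is not the unique surviving set $S_i$, so it must have been removed from $\Sc$ at some point during the run; the only removal rule in \Cref{alg:rounding} left — once a set is never placed in $\Sigma$ — is removal triggered by its $x$-value hitting $0$, and because {\sc RoundTwoSets} only ever modifies variables of sets currently in $\Sc$, that value is frozen at $0$ afterwards. Thus $x'_{i'}=0$.

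With this in hand I would fix an arbitrary $e_j\in\X_r=\X\setminus\Xi$. By definition of $\Xi$, the element $e_j$ lies in no set of $\Sigma$, so every set containing $e_j$ belongs to $\S_r$; combined with the previous paragraph, the only set containing $e_j$ that can have nonzero final $x$-value is $S_i$. Using feasibility of the final \LP solution $\sigma'=(x',z')$ — guaranteed by \Cref{lem:weight-maintained,lem:z-feasibility} together with the invariants of \Cref{cl:z-excess-set} — constraint \Cref{constr:coverage} gives $z'_j\le\sum_{i'':\,e_j\in S_{i''}}x'_{i''}$, which is $x'_i\le 1$ when $e_j\in S_i$ and is $0$ when $e_j\notin S_i$. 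Summing over $\X_r$ then yields $Z_r=\sum_{e_j\in\X_r}z'_j=\sum_{e_j\in\X_r\cap S_i}z'_j\le|\X_r\cap S_i|\le|S_i|$. Since $\S_e=\{S_i\}$ covers exactly the elements of $S_i$, and in particular all $|\X_r\cap S_i|\ge Z_r$ elements of $\X_r\cap S_i$, the claim follows.

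I expect the only delicate part to be the second step, namely pinning down precisely why a set can end up in $\S_r$ only with $x$-value $0$ (or be the last surviving set $S_i$); this rests on carefully enumerating the two ways a set leaves $\Sc$ in \Cref{alg:rounding} and on the fact that a set's variable is never touched again once it exits $\Sc$. Everything after that is a one-line summation against constraint \Cref{constr:coverage}.
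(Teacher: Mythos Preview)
Your argument is correct and follows essentially the same route as the paper: identify $\S_e=\{S_i\}$, argue that every other set in $\S_r$ has final $x$-value $0$, and then bound $Z_r$ via the coverage constraint to conclude $|\X_r\cap S_i|\ge Z_r$. The only cosmetic difference is that the paper uses the tightness $z'_j=\sum x'_{i''}$ (set explicitly in {\sc RoundTwoSets}) together with $x'_i\le\alpha$ to get the slightly sharper $|\X_r\cap S_i|\ge Z_r/\alpha$, whereas you use the inequality form with $x'_i\le 1$; both suffice for the claim.
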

\begin{proof}
	 By assumption, we have that $\S_e \neq \emptyset$, i.e. $\S_e = \{S_i\}$ for some $S_i \in \S$. For each $S_l \in \S_r$ with $l \neq i$, we have that $x_l = 0$, again by the condition of the outer while loop. Since \Cref{constr:coverage} is made tight for all elements in each execution of {\sc{RoundTwoSets}}, for any element $e_{j'} \in \X_r$ but $e_{j'} \not\in S_i$, we have that $z_{j'} = 0$. On the other hand, for elements $e_j \in \X_r \cap S_i$, we have that $z'_j = x'_i \le \alpha$. If the number of such elements is $p$, then we have that $Z_r \le \alpha \cdot p$. The lemma follows since choosing $S_i$ covers all of these $p$ elements, and $p \ge Z_r/\alpha \ge Z_r.$
\end{proof}

In the following lemma, we show that \Cref{alg:rounding} produces a feasible solution.
\begin{lemma} \label[lem]{lem:atleast-k}
	The solution $\Sigma \cup \S_e$ returned by \Cref{alg:rounding} covers at least $k$ elements.
\end{lemma}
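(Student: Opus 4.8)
The plan is to show that the integral solution $\Sigma \cup \S_e$ covers at least $k$ elements by tracking the quantity $\sum_{e_j \in \Xc} z_j$ throughout the execution of \Cref{alg:rounding}, together with the elements that get ``permanently covered'' (i.e., moved into $\Xi$). First I would set up the accounting: at the start, by \Cref{lem:feasibility}, we have $\sum_{e_j \in \X} z_j \ge k$, and $\Xc = \X$, $\Xi = \emptyset$, $\Sigma = \emptyset$. The key monotone quantity is $|\Xi| + \sum_{e_j \in \Xc} z_j$. I claim this never decreases over the course of the algorithm. The only events that change it are (a) a call to {\sc{RoundTwoSets}}, and (b) the removal of a set $S_a$ from $\Sc$ when $x_a$ reaches $\alpha$, at which point $S_a \cap \Xc$ is moved from $\Xc$ into $\Xi$.

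For event (a): by Part 2 of \Cref{lem:weight-maintained}, $\sum_{e_j \in \Xc} z_j$ does not decrease (note $\Xc$ is unchanged during a {\sc{RoundTwoSets}} call itself), so $|\Xi| + \sum_{e_j \in \Xc} z_j$ does not decrease. For event (b): when $x_a$ hits $\alpha$ and we move the $|S_a \cap \Xc|$ elements of $S_a$ out of $\Xc$ into $\Xi$, the quantity $|\Xi|$ increases by $|S_a \cap \Xc|$, while $\sum_{e_j \in \Xc} z_j$ decreases by $\sum_{e_j \in S_a \cap \Xc} z_j$. But each such $e_j$ has $z_j \le 2\alpha \le 1 \le |S_a \cap \Xc|/|S_a \cap \Xc|$... more precisely, by \Cref{lem:z-feasibility} each $z_j \le 2\alpha \le 1$, so $\sum_{e_j \in S_a \cap \Xc} z_j \le |S_a \cap \Xc|$, hence the net change is nonnegative. (When a set $S_b$ is removed because $x_b = 0$, no $z_j$ changes and $\Xc, \Xi$ are untouched, so the quantity is unaffected; similarly renaming $S_b$ to $S_a$ in \Cref{lin:Sa-chosen-2} changes nothing.) Therefore at the end of the outer while loop, $|\Xi| + \sum_{e_j \in \Xc} z_j \ge k$.

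Now at termination, $\Xc = \X_r$ (the elements never moved to $\Xi$) and $\sum_{e_j \in \Xc} z_j = Z_r$ in the notation preceding \Cref{cl:last-set}. The solution we return is $\Sigma \cup \S_e$; the sets of $\Sigma$ cover exactly $\Xi$ (every element added to $\Xi$ lies in some set of $\Sigma$, by construction of the algorithm), so $|\Xi|$ elements are covered by $\Sigma$. It remains to cover $Z_r$ more elements from $\X_r$. If $\S_e = \emptyset$, then by the argument in the proof of \Cref{cl:last-set} every $e_j \in \X_r$ has $z'_j = 0$ (since all sets of $\S_r$ containing it have $x = 0$ and \Cref{constr:coverage} is tight), so $Z_r = 0$ and we already cover $|\Xi| \ge k$ elements. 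If $\S_e = \{S_i\}$, then \Cref{cl:last-set} gives that $S_i$ covers at least $Z_r$ elements of $\X_r$, disjoint from $\Xi$. In either case $\Sigma \cup \S_e$ covers at least $|\Xi| + Z_r \ge k$ elements.

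The main obstacle I anticipate is the bookkeeping in event (b): one must be careful that the elements moved into $\Xi$ are genuinely disjoint from those already there and from $\X_r$, and that the bound $z_j \le 1$ (which is exactly what \Cref{lem:z-feasibility} buys us, via $\alpha \le 1/2$) is what makes the exchange $|S_a \cap \Xc|$-for-$\sum z_j$ profitable — this is precisely the place where ``shallowness'' of the elements is used. A secondary subtlety is making sure the invariant is stated over $\Xc$ only (not all of $\X$), so that Part 2 of \Cref{lem:weight-maintained}, which also is phrased over $\Xc$, applies verbatim; elements outside $\Xc$ are in $\Xi$ and contribute to the $|\Xi|$ term instead.
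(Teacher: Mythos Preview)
Your proposal is correct and follows essentially the same approach as the paper. The paper establishes the chain $|\Xi| + Z_r \ge \sum_{e_j \in \X} z'_j \ge \sum_{e_j \in \X} z_j \ge k$ by comparing the final \LP solution to the initial one (invoking \Cref{lem:weight-maintained} Part~2 and $z'_j \le 1$ once, at the end), whereas you track the invariant $|\Xi| + \sum_{e_j \in \Xc} z_j$ step by step through events (a) and (b); these are the same argument, with your version making the inductive monotonicity explicit.
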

\begin{proof}
	There are two cases -- $\S_e = \emptyset$, or $\S_e = \{S_i\}$ for some $S_i \in \S$. In the first case, all elements in $\X_r$ are uncovered, and for all such elements, $e_{j'} \in \X_r$, we have that $z_{j'} = 0$. In this case, it is trivially true that the number of elements of $\X_r$ covered by $\S_e$ is $Z_r$ $(= 0)$. In the second case, the same follows from \Cref{cl:last-set}. Therefore, in both cases we have that, 
	\begin{align*}
		\text{Number of elements covered} &\ge |\Xi| + Z_r
		\\&\ge \sum_{e_j \in \Xi} z'_j + \sum_{e_j \in \X_c} z'_i \tag{By \Cref{lem:z-feasibility} and $z_j' \le 1$}
		\\&= \sum_{e_j \in \X} z'_j
		\\&\ge \sum_{e_j \in \X} z_j \tag{\Cref{lem:weight-maintained}, Part 2}
		\\&\ge k \tag{By \Cref{lem:feasibility} and \Cref{constr:k-cover}}
	\end{align*}
	Recall that $z_j$ refers to the $z$-value of an element $e_j$ in the optimal LP solution $\sigma$, at the beginning of the algorithm.
\end{proof}

\begin{lemma} \label[lem]{lem:mainlemma}
	Let $\Sigma \cup \S_e$ be the solution returned by \Cref{alg:rounding}, and let $B$ be the weight of the heaviest set in $S$. Then, 
	\begin{enumerate}
		\item $w(\Sigma) \le \frac{1}{\alpha} \sum_{S_i \in \Sigma} w_i x'_i$
		\item $w(\S_e) \le B$ 
		\item $w(\Sigma \cup S_e) \le \frac{1}{\alpha} \sum_{S_i \in \Sigma} w_i x'_i + B$
	\end{enumerate}
\end{lemma}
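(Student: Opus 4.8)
The plan is to establish the three items essentially by bookkeeping on the behaviour of \Cref{alg:rounding}; there is no deep obstacle here, only a point about when the algorithm freezes a variable that one must be careful to verify.

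For Part 1, the key observation is that a set $S_i$ is placed into $\Sigma$ exactly at the moment its variable $x_i$ reaches $\alpha$, and at that same moment $S_i$ is removed from $\Sc$. Since {\sc RoundTwoSets} is only ever invoked on sets that currently lie in $\Sc$ --- either as the $S_a$ selected in \Cref{lin:Sa-chosen} or \Cref{lin:Sa-chosen-2}, or as the $S_b$ selected in \Cref{lin:Sb-chosen} --- once $S_i$ leaves $\Sc$ its variable is never touched again. Hence in the final solution $\sigma' = (x',z')$ we have $x'_i = \alpha$ for every $S_i \in \Sigma$, so $\sum_{S_i \in \Sigma} w_i x'_i = \alpha \sum_{S_i \in \Sigma} w_i = \alpha\, w(\Sigma)$, and Part 1 follows (in fact with equality). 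For Part 2, the outer while loop of \Cref{lin:outer-whileloop} runs only while $|\Sc| \ge 2$, so upon termination $|\Sc| \le 1$; thus $\S_e = \Sc$ is either empty, giving $w(\S_e) = 0$, or a singleton $\{S_i\}$ with $S_i \in \S$, giving $w(\S_e) = w_i \le B$ by the definition of $B$ as the weight of the heaviest set in $\S$.

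Part 3 is the combination of the first two. Since the weight function is subadditive over unions --- indeed $\Sigma$ and $\S_e$ are disjoint, as $\Sigma$ consists only of sets that were deleted from $\Sc$ while $\S_e$ is precisely the surviving $\Sc$ --- we get $w(\Sigma \cup \S_e) \le w(\Sigma) + w(\S_e) \le \frac{1}{\alpha}\sum_{S_i \in \Sigma} w_i x'_i + B$ by Parts 1 and 2. To match the form in which this lemma is invoked in the proof of \Cref{thm:rounding-theorem} (namely $w(\Sigma_{p2}) \le \frac{1}{\alpha} cost(\sigma^*) + B$), I would additionally note that $\sum_{S_i \in \Sigma} w_i x'_i \le \sum_{S_i \in \S} w_i x'_i = cost(\sigma')$, that $cost(\sigma') = cost(\sigma)$ because each call to {\sc RoundTwoSets} preserves the LP cost by \Cref{lem:weight-maintained}(1), and that $cost(\sigma) \le cost(\sigma^*)$ by \Cref{lem:feasibility}; chaining these gives $w(\Sigma \cup \S_e) \le \frac{1}{\alpha} cost(\sigma^*) + B$.

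The only step that requires care is the ``freezing'' claim underlying Part 1 (and the disjointness used in Part 3): one must check that, once a set is removed from $\Sc$, it is never selected again as $S_a$ or $S_b$ and hence its $x$-variable never changes thereafter. This is immediate from inspecting \Cref{lin:Sa-chosen}, \Cref{lin:Sb-chosen}, and \Cref{lin:Sa-chosen-2}, all of which draw from the current $\Sc$. Everything else is routine.
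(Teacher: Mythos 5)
Your proof is correct and follows essentially the same approach as the paper's: Part 1 from the fact that sets enter $\Sigma$ only when their variable reaches $\alpha$ (you strengthen the paper's ``$x'_i \ge \alpha$'' to equality by noting that a removed set's variable is frozen, which is accurate), Part 2 from $|\S_e| \le 1$, and Part 3 by combining. The extra remarks chaining through $cost(\sigma') = cost(\sigma) \le cost(\sigma^*)$ belong to the proof of \Cref{thm:rounding-theorem} rather than this lemma, but they are correct and harmless.
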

\begin{proof}
	For the first part, note that a set $S_i$ is added to $\Sigma$ only if $x'_i \ge \alpha$. For the second part, note that $\S_e$ contains at most one set $S_i \in \S$. By definition, weight of any set in $S_i$ is bounded by $B$, the maximum weight of any set in $\S$. The third part follows from the first and the second parts.
\end{proof}

From \Cref{lem:atleast-k} and \Cref{lem:mainlemma}, we conclude that $\Sigma \cup \S_e$ is the solution that covers at least $k = k_p - |\X_1|$ elements from $\X_p \setminus \X_1$, and whose cost is at most $\frac{1}{\alpha} \sum_{S_i \in \Sigma} w_i x'_i + B$.

\section{Generalization of \PSC}
Consider the following generalization of \PSC problem, where the elements $e_j \in \X'$ have profits $p_j \ge 0$ associated with them. Now the goal is to choose a minimum-weight collection $\Sigma \subseteq \S'$ such that the total profit of elements covered by the sets of $\Sigma$ is at least $K$, where $0 \le K \le \sum_{e_j \in \X} p_j$ is provided as an input. Note that setting $p_j = 1$ for all elements we get the original \PSC problem. This generalization has been considered in \cite{KonemannPS2011}.

It is easy to modify our algorithm that for \PSC, such that it returns a $2\beta + 2$ approximate solution for this generalization as well. We briefly describe the modifications required. Firstly, we modify \Cref{constr:k-cover} of \PSC \LP to incorporate the profits as follows: $$\sum_{e_j \in \X} z_j \cdot p_j \ge K$$ The preprocessing and the rounding algorithms work with the straightforward modifications required to handle the profits. One significant change is in the rounding algorithm (\Cref{alg:rounding}). We compare the ``cost-effectiveness'' of the two sets $S_a, S_b$ in \Cref{lin:sets-compared} for the \PSC as $\frac{|S_a \cap \X_c|}{w_a} \ge \frac{|S_b \cap \X_c|}{w_b}$. For handling the profits of the elements, we replace this with the following condition --  $\frac{P_a}{w_a} \ge \frac{P_b}{w_b}$, where $P_a := \sum_{e_j \in S_a \cap \X_c} p_j$, and $P_b := \sum_{e_j \in S_b \cap \X_c} p_j$. With similar straightforward modifications, the analysis of \Cref{alg:rounding} goes through with the same guarantee on the cost of the solution. We remark here that despite the profits, the approximation ratio only depends on that of the standard \SC \LP, which is oblivious to the profits.
\bibliography{wpgsc}
\end{document}